\title{A Clique-Based Separator for Intersection Graphs of Geodesic Disks in $\Reals^2$}
\author{Boris Aronov}{Department of Computer Science and Engineering, Tandon School of Engineering, New York University, Brooklyn, NY 11201 USA \and \url{https://engineering.nyu.edu/faculty/boris-aronov}}{boris.aronov@nyu.edu}{https://orcid.org/0000-0003-3110-4702}{Work has been supported by NSF grant CCF~20-08551.}
\author{Mark de Berg}{Department of Mathematics and Computer Science, TU Eindhoven, the Netherlands}{M.T.d.Berg@tue.nl}{https://orcid.org/0000-0001-5770-3784}{MdB is supported by the  Dutch Research Council (NWO) through Gravitation-grant NETWORKS-024.002.003.}
\author{Leonidas Theocharous}{Department of Mathematics and Computer Science, TU Eindhoven, the Netherlands}{l.theocharous@tue.nl}{https://orcid.org/0000-0002-1707-6787}{LT is supported by the  Dutch Research Council (NWO) through Gravitation-grant NETWORKS-024.002.003.}
\authorrunning{B.~Aronov, M.~de Berg, and L.~Theocharous}
\titlerunning{A Clique-Based Separator for Geodesic Disks in Polygonal Domains}
\keywords{Computational geometry, intersection graphs, separator theorems}
\newcommand{\BeginMyItemize}{\begin{itemize}\setlength{\itemsep}{-\parskip}}
\newcommand{\EndMyItemize}{\end{itemize}}
\newcommand{\BeginMyEnumerate}{\begin{enumerate}\setlength{\itemsep}{-\parskip}}
\newcommand{\EndMyEnumerate}{\end{enumerate}}
\newcommand{\mypara}[1]{\medskip\noindent{\sf\textbf{#1}}}  
\renewcommand{\leq}{\leqslant}
\renewcommand{\geq}{\geqslant}
\newtheorem{defin}{Definition}
\newtheorem{lem}[defin]{Lemma}
\newtheorem{myfact}[defin]{Fact}
\newenvironment{myquote}%
  {\list{}{\leftmargin=4mm\rightmargin=4mm}\item[]}%
  {\endlist}
\newcommand{\claiminproof}[2]{\begin{myquote}\noindent\emph{Claim~#1.} #2 \end{myquote}}
\newcommand{\obsinproof}[2]{\begin{myquote}\noindent\emph{Observation~#1.} #2 \end{myquote}}
\newenvironment{proofinproof}{\begin{myquote}\noindent\emph{Proof.}}{\hfill $\lhd$ \end{myquote}}
\newcommand{\Reals}{\mathbb{R}}
\newcommand{\bd}{\partial}
\newcommand{\diam}{\mathrm{diam}}
\newcommand{\graph}{\ensuremath{\mathcal{G}}}
\newcommand{\D}{\ensuremath{\mathcal{D}}}
\newcommand{\E}{\ensuremath{\mathcal{E}}}
\newcommand{\X}{\ensuremath{\mathcal{X}}}
\newcommand{\G}{\ensuremath{\mathcal{G}}}
\newcommand{\calS}{\ensuremath{\mathcal{S}}}
\newcommand{\eps}{\varepsilon}
\newcommand{\mydef}{\coloneq}
\newcommand{\etal}{\emph{et al.}\xspace}
\newcommand{\ig}{\graph^{\times}}
\newcommand{\sep}{\mathcal{S}}
\DeclareMathOperator{\ply}{ply}
\newcommand{\hdist}{\mathit{hdist}}
\newcommand{\mis}{\textsc{Maximum Independent Set}\xspace}
\newcommand{\fvs}{\textsc{Feedback Vertex Set}\xspace}
\newcommand{\col}{\textsc{Coloring}\xspace}
\begin{document}

\maketitle

\begin{abstract}
Let $d$ be a (well-behaved) shortest-path metric defined on a path-connected subset of $\Reals^2$ and let $\D=\{D_1,\ldots,D_n\}$ be a set of
geodesic disks with respect to the metric~$d$. We prove that $\ig(\D)$, the intersection
graph of the disks in $\D$, has a clique-based separator consisting of $O(n^{3/4+\eps})$ cliques.
This significantly extends the class of objects whose intersection graphs have small clique-based separators.

Our clique-based separator yields an algorithm for $q$-\col, which runs in time $2^{O(n^{3/4+\eps})}$, assuming the boundaries of the disks $D_i$ can be computed in polynomial time. We also use our clique-based separator to obtain a simple, efficient, and almost exact
distance oracle for intersection graphs of geodesic disks. Our distance oracle
uses $O(n^{7/4+\eps})$ storage and can report the hop distance between any two nodes in $\ig(\D)$
in $O(n^{3/4+\eps})$ time, up to an additive error of one.
So far, distance oracles with an additive error of one that use subquadratic storage and sublinear
query time were not known for such general graph classes.
\end{abstract}

\section{Introduction}

\subparagraph{(Clique-based) separators.} 
The Planar Separator Theorem states that any planar graph with $n$ nodes has a 
\emph{balanced separator} of size~$O(\sqrt{n})$. In other words, for any planar graph
$\graph=(V,E)$ there exists a subset $S\subset V$ of size $O(\sqrt{n})$ with the 
following property:\footnote{Hereafter we will simply refer to such a subset as
a \emph{separator}, omitting the word ``balanced'' for brevity. Moreover, we do not require
the balance factor to be 2/3, but we allow
the size of the subsets $A$ and $B$ to be at most $\delta n$, for some fixed constant~$\delta<1$. }
$V\setminus S$ can be split into subsets $A$ and $B$ with 
$|A|\leq 2n/3$ and $|B|\leq 2n/3$ such that there are no edges between $A$ and $B$.
This fundamental result was first proved in 1979 by Lipton and Tarjan~\cite{LT-planar-separator-thm} 
and has been simplified and refined in several ways, see,~e.g.,~\cite{DV-cycle-separator,fox2008}.
The Planar Separator Theorem 
proved to be extremely useful for obtaining efficient divide-and-conquer algorithms 
for a large variety of problems on planar graphs.

In this paper we are interested in \emph{geometric intersection graphs} in the plane. 
These are graphs whose node set corresponds to a set $\D$ of objects in the plane and that
have an edge between two nodes iff the corresponding objects intersect. We will denote
this intersection graph by $\ig(\D)$. \emph{(Unit) disk graphs}
and \emph{string graphs}---where the set $\D$ consists of (unit) disks and curves, respectively---are
among the most popular types of intersection graphs. Unit disk graphs in particular 
have been studied extensively, because they serve as a model for wireless communication networks.
It is well known that disk graphs are generalizations of planar graphs, because by
the Circle Packing Theorem (also known as the Koebe–Andreev–Thurston Theorem)
every planar graph is the intersection graph of a set of disks with disjoint interiors~\cite{thurston-78}.
It is therefore natural to try to extend the Planar Separator Theorem to 
intersection graphs. A direct generalization is clearly impossible,
however, since intersection graphs can contain arbitrarily large cliques.

There are several ways to still obtain separator theorems for intersection graphs.
One can allow the size of the separator to depend on $m$, the number 
of edges, instead of on the number of vertices. For example, any
string graph admits a separator of size~$O(\sqrt{m})$~\cite{Lee-string-sep}. One may 
also put restrictions on the set $\D$ to prevent large cliques. For example, 
if we restrict the \emph{density}\footnote{The density of a set $\D$ 
of objects in $\Reals^d$ is the maximum, over all balls~$b\subset \Reals^d$, of the cardinality of
the sets $\{ o\in \D : o\cap b\neq \emptyset \mbox{ and } \diam(o)\geq \diam(b) \}$, where
$\diam(o)$ denotes the diameter of object~$o$.} of a set $\D$ of planar objects 
to some value $\lambda$, then we can obtain a separator\footnote{The result generalizes to $\Reals^d$, where the bound becomes~$O(\lambda^{1/d} n^{1-1/d})$.}
of size $O(\sqrt{\lambda n})$~\cite{HQ-low-dens-sep}.
This result implies that intersection graphs of disks (or, more generally, of \emph{fat} objects)
whose ply is bounded by a constant admit a separator of size~$O(\sqrt{n})$.
(The \emph{ply} of a set of objects in~$\Reals^2$ is the maximum, over all
points $q\in\Reals^2$, of the number of objects containing~$q$.)
Recently, De~Berg~\etal~\cite{bbkmz-ethf-20}, following work of Fu~\cite{FU2011379}, introduced what we
will call \emph{clique-based separators}. A clique-based separator 
is a (balanced) separator consisting of cliques, and its size is 
not measured as the total number of nodes of the cliques, but as the number of cliques.
De~Berg~\etal showed that the intersection graph of a set $\D$ of convex fat object in the plane
admits a separator consisting of $O(\sqrt{n})$ cliques. In fact, their result is
slightly stronger: it states that there is a separator $S$ that can be decomposed
into 
$O(\sqrt{n})$ cliques $C_1,\ldots,C_t$ such that  $\sum_{i=1}^t \gamma(|C|) = O(\sqrt{n})$,
where $\gamma(t)=\log t$ is a cost function\footnote{In fact, the result holds for any cost function
$\gamma(t)=O(t^{\frac12 -\eps})$, where $\eps>0$ is a constant, but in typical applications either only the number of cliques is important, or $\gamma(t)=\log t$ suffices.} on the cliques.
Clique-based separators are useful because cliques can be handled very efficiently
for many problems. De~Berg~\etal used their clique-based separators
to develop a unified framework to solve many classic graph problems, including
\textsc{Independent Set}, \textsc{Dominating Set}, \textsc{Feedback Vertex Set}, and more.

Recently De~Berg~\etal~\cite{bkmt-cbsgis-23} proved clique-based separator theorems for various
other classes of objects, including map graphs and intersection graphs of pseudo-disks,
which admit clique-based separators consisting of $O(\sqrt{n})$ and $O(n^{2/3})$ cliques, respectively.
They also showed that intersection graphs of \emph{geodesic disks inside a simple polygon}---that
is, geodesic disks induced by the standard shortest-path metric inside the polygon---admit
a clique-based separator consisting of $O(n^{2/3})$ cliques.
They left the case of geodesic disks in a polygon with holes as an open problem.
We note that string graphs, which subsume the class of intersection graphs of geodesic disks,
do \emph{not} admit clique-based separators of sublinear size, since string
graphs can contain arbitrarily large induced bipartite cliques.
 
\subparagraph{Our results.}
In this paper we show that intersection graphs of geodesic disks in a polygon with holes
admit a clique-based separator consisting of a sublinear number of cliques. 
Our result is actually much more general,
as it shows that \emph{for any well-behaved shortest-path metric~$d$} defined
on a path-connected and closed subset~$F\subset \Reals^2$, a set of geodesic disks 
with respect to that metric admits a separator consisting of $O(n^{3/4+\eps})$ cliques. 
This includes the shortest-path metric defined by a set of (possibly curved) obstacles in the plane,
the shortest-path metric defined on a terrain, and the shortest-path metric
among weighted regions in the plane. (See Section~\ref{se:proper-path-system} for the
formal requirements on a well-behaved shortest-path metric.)
Note that we do not require shortest paths to be unique, nor do we put a bound on
the number of intersections between the boundaries of two geodesic disks, nor do we 
require the metric space to have bounded doubling dimension.

The generality of our setting implies that previous approaches to construct
clique-based separators will not work. For example, the method of
De~Berg \etal~\cite{bbkmz-ethf-20}
to construct a clique-based separator for Euclidean disks 
crucially relies on disks being fat objects. More precisely, it
uses the fact that many relatively large fat objects intersecting
a given square must form $O(1)$ cliques. This property
already fails for geodesic disks for the Euclidean shortest-path metric 
inside a simple polygon. Thus, De~Berg~\etal~\cite{bkmt-cbsgis-23} use a different approach:
they prove that geodesic disks inside a simple polygon
are \emph{pseudo-disks} and then show how to obtain a clique-based separator for pseudo-disks. 
In our more general setting, however, geodesic disks need not be pseudo-disks:
the boundaries of two geodesic disks in an arbitrary metric can intersect 
arbitrarily many times---this is already the case for the Euclidean shortest-path
metric in a polygon with holes. Thus we have to proceed differently.

The idea of our new approach is as follows. We first reduce the ply of the set $\D$ by removing 
all  cliques of size $\Omega(n^{1/5})$, thus obtaining a set $\D^*$ of ply $O(n^{1/5})$. 
(The removed cliques will eventually be added to the separator.
A similar preprocessing step to reduce the ply was used 
in~\cite{bkmt-cbsgis-23} to handle pseudo-disks.) The remaining arrangement can 
still be arbitrarily complex, however. To overcome this, we ignore the arrangement 
induced by the disks, and instead focus on the realization of the
graph $\ig(\D^*)$ obtained by drawing a shortest path $\pi_{ij}$ between
the centers of any two intersecting disks $D_i,D_j\in\D^*$.
We then prove that the number of edges of $\ig(\D^*)$ must be $O(n^{8/5})$;
otherwise there will be an intersection point of two shortest paths
$\pi_{ij},\pi_{k\ell}$ that has large ply, which is not possible due to the
preprocessing step. Since the number of edges of $\ig(\D^*)$ is $O(n^{8/5})$
we can use the separator result on string graphs to obtain a
separator of size $O(\sqrt{n^{8/5}}) = O(n^{4/5})$. Adding the cliques that were removed
in the preprocessing step then yields a separator consisting of $O(n^{4/5})$ cliques
and singletons. Finally, we devise a bootstrapping
mechanism to further decrease the size of the separator, leading to a
separator consisting of $O(n^{3/4+\eps})$ cliques.

Clique-based separators give sub-exponential algorithms for 
\mis,
\fvs, and $q$-\col for constant~$q$~\cite{bkmt-cbsgis-23}.  When using
our clique-based separator, the running times for \mis and \fvs are
inferior to what is known for string graphs. 
For $q$-\col with $q\geq 4$ this is not the case, however,
since $4$-\col does not admit a sub-exponential algorithm, assuming {\sc eth}~\cite{BonnetR19}.
Our clique-based separator, on the other hand, yields an algorithm with running time $2^{O(n^{3/4+\eps})}$,
assuming the boundaries of the disks $D_i$ can be computed in polynomial time. 
Another application of our separator result is to distance oracles, as discussed next.

\subparagraph{Application to distance oracles.}
One of the most basic queries one can ask about a (possibly edge-weighted) graph~$G=(V,E)$ 
is a \emph{distance query}: given two nodes $s,t\in V$, what is the distance between them? 
A data structure answering such queries is called a \emph{distance oracle}. 
One can simply store all pairwise distances in a distance matrix so that distance queries can 
be answered in $O(1)$ time, but this requires $\Theta(n^2)$ storage,
where $n \mydef |V|$. At the other extreme, one can 
do no preprocessing and answer a query by running a single-source shortest path
algorithm; this uses $O(n+m)$ storage, where $m \mydef |E|$ and a query 
costs~$\Omega(n)$ time. The main question is if sublinear query time
can be achieved with subquadratic storage. 

As an application of our clique-based separator we present a simple 
and almost exact distance oracle for intersection graphs of geodesic
disks. While our approach is standard---it was already used in 1996 by 
Arikati~\etal~\cite{DBLP:conf/esa/ArikatiCCDSZ96}---it does provide an almost exact distance
oracle in a more general setting than what was known before, thus showing the
power of clique-based separators.  Our distance oracle 
uses $O(n^{7/4+\eps})$ storage and can report the hop distance between any two nodes in $\ig(\D)$
in $O(n^{3/4+\eps})$ time, up to an additive error of one. 
This is the first distance oracle with only an additive error for
a graph class that is more general than planar graphs:
even for Euclidean unit-disk graphs the known distance oracle has
a multiplicative error~\cite{chan-skrepetos,gao-zhang}. (Admittedly, the storage and 
query time of this oracle are significantly better than of  ours.)
We note that clique-based separators were recently also used (in a similar way) to
obtain a distance oracle for so-called transmission graphs~\cite{b-nrdo-23}.
(A transmission graph is a directed graph whose nodes correspond to Euclidean
disks in the plane, and that have an edge from disk $D_i$ to disk $D_j$ iff
the center of $D_j$ is contained in $D_i$.) 
\subparagraph{Related work on distance oracles.}
There has been a lot of work on distance oracles, both for undirected and for directed graphs;
see for example the survey by Sommer~\cite{DBLP:journals/csur/Sommer14} for an overview
of the work up to~2014. Since our interest is in intersection graphs,
which are undirected, hereafter we will restrict our discussion to undirected graphs.

In the edge-weighted setting, one obviously needs to store all edge weights
to answer queries exactly. Hence, for exact distance oracles on weighted graphs the focus has been on 
on sparse graphs and, in particular, on planar graphs. 
Very recently, Charalampopoulos~\etal~\cite{DBLP:journals/jacm/CharalampopoulosGLMPWW23}
achieved a major breakthrough by providing an exact distance oracle for planar graphs
that uses $O(n^{1+o(1)})$ storage and has $O(\log^{2+o(1)})$ query time. The solution 
even works for weighted planar graphs, and it allows other trade-offs as well.
For $(1+\eps)$-approximate distance oracles on planar graphs, 
Le and Wulff-Nilsen~\cite{DBLP:conf/focs/LeW21} showed how to achieve $O(1/\eps^2)$ query time with $O(n/\eps^2)$ storage.
We refer the reader to the recent paper by Charalampopoulos~\etal~\cite{DBLP:journals/jacm/CharalampopoulosGLMPWW23}
for a historical overview of the results on planar distance oracles. 

For non-planar graphs the results are far less good: the distance oracles are
approximate and typically use significantly super-linear storage;
see the survey by Sommer~\cite{DBLP:journals/csur/Sommer14}. 
For example, Chechik~\cite{DBLP:conf/stoc/Chechik14} presented a $(2k-1)$-approximate distance oracle
using $O(kn^{1+1/k})$ storage and with $O(1)$ query time, for any given integer $k\geq 1$. 
Moreover, for $t<2k-1$, any $t$-approximate distance oracle requires $\Omega(kn^{1+1/k})$ 
bits of storage~\cite{DBLP:journals/jacm/ThorupZ05}.

Somewhat better results are known for 
unweighted graphs, where the approximation often has an additive term in addition to the
multiplicative term. 
More precisely, if $d_G(s,t)$ denotes the actual
distance between $s$ and $t$ in $G$, then an \emph{$(\alpha,\beta)$-approximate oracle} 
reports a distance $d^*$ such that $d_G(s,t) \leq d^* \leq \alpha\cdot d_G(s,t)+\beta$.
Patrascu and Roditty~\cite{patrascu} presented a $(2,1)$-approximate oracle with $O(1)$ query time
that uses $O(n^{5/3})$ storage, 
Abraham and Gavoille~\cite{abraham-gavoille} presented a $(2k-2,1)$-approximate oracle with $O(k)$ query time
that uses $\tilde{O}(n^{1+2/(2k-1)})$ storage (for $k\geq 2$).

To summarize, for non-planar graphs all distance oracles with subquadratic storage
have a multiplicative error of at least~2, even in the unweighted case. 
The only exception is for the rather restricted case of unit-disk graphs, where 
Gao and Zhang~\cite{gao-zhang}, and later Chan and Skrepetos~\cite{chan-skrepetos},
presented a $(1+\eps)$-approximate distance oracle with $O(1)$ query time 
that uses $O_{\eps}(n\log n)$ storage. (This result actually also
works in the weighted setting, where the weight of an edge between two disks is the
Euclidean distance between their centers.)

\section{A clique-based separator for geodesic disks in \texorpdfstring{$\Reals^2$}{ℝ²}}\label{clique-based}
Let $d$ be a metric defined on a closed path-connected subset $F\subset \Reals^2$, and let $\D=\{D_1,\ldots,D_n\}$
be a set of geodesic disks in~$F$, with respect to the metric~$d$. Thus each disk\footnote{From now on, 
to simplify the terminology, we omit the adjective \emph{geodesic} and simply use the term \emph{disk}
to refer to the objects in~$\D$.}~$D_i$ is
defined as $D_i \mydef \{ q\in F: d(q,p_i) \leq r_i\}$, where $p_i\in F$ is the center of $D_i$
an~$r_i\geq 0$ is its radius. Let $\D_0 \mydef \D$ and
recall that $\ig(\D_0)$ denotes the intersection graph
of $\D_0$. (The reason for introducing the notation $\D_0$ will become clear shortly.)
We denote the set of edges of $\ig(\D_0)$ by~$E$ and define $m\mydef |E|$.

\subsection{A first bound}
Our basic construction of a clique-based separator~$\sep$ for $\ig(\D_0)$, yields a separator with $O(n^{4/5})$ cliques. In Section~\ref{sec:improving},
we will apply a bootstrapping scheme to further reduce the size of the separator.
The basic construction proceeds in three steps: in a preprocessing step we reduce the
ply of the set of disks we need to deal with, in the second step we prove that if
the ply is sublinear then the number of edges in the intersection graph is subquadratic,
and in the third step we construct the separator.

\mypara{Step~1: Reducing the ply.} 
For a point $p\in F$, let $\D_0(p) \mydef \{D_i \in \D_0: p\in D_i\}$ be the set of disks from~$\D_0$
containing~$p$---note that $\D_0(p)$ forms a clique in $\ig(\D_0)$---and define $\ply(p) \mydef |\D_0(p)|$ 
to be the ply of~$p$ with respect to~$\D_0$. The ply of the set $\D_0$ is defined as 
$\ply(\D_0) \mydef \max \{p\in F: \ply(p)\}$.

We start by reducing the ply of $\D_0$ in the following greedy manner. Let $\alpha$ be
a fixed constant with $0<\alpha<1$. In the basic construction we will use $\alpha=1/5$,
but in our bootstrapping scheme we will work with other values as well.  
We check whether there exists a point $p$ such that $|\D_0(p)|\geq \frac{1}{4}n^{\alpha}$.
If so, we remove $\D_0(p)$ from $\D_0$ and add it as a clique to~$\sep$.  
We repeat this process until $\ply(\D_0)<\frac{1}{4}n^{\alpha}$. 
Thus in the first step at most $4n^{1-\alpha}$ cliques are added to~$\sep$.

To avoid confusion with our initial set $\D_0$, we denote the set of disks remaining at the end
of Step~1 by~$\D_1$ and we denote the set of edges of $\ig(\D_1)$ by $E_1$. 

\mypara{Step~2: Bounding the size of~$E_1$.} 
To bound the size of $E_1$, we draw a shortest path $\pi_{ij}$ between the centers $p_i,p_j$
of every two intersecting 
disks $D_i,D_j\in \D_1$, thus obtaining a geometric realization of the graph~$\ig(\D_1)$.  
From now on, with a slight abuse of notation, we will not distinguish between an 
edge $(D_i,D_j)$ in $\ig(\D_1)$ and its geometric realization~$\pi_{ij}$. 
Let $\Pi(\D_1) \mydef \{ \pi_{ij} : (D_i,D_j) \in E_1\}$ be the resulting set of paths.
To focus on the main idea behind our proof we will, for the time being, 
assume that $\Pi(\D_1)$ is a \emph{proper path set}: a set of paths such that
any two paths $\pi_{ij},\pi_{k\ell}$ have at most two points in common,
each intersection point is either a shared endpoint or a proper crossing,
and no proper crossing coincides with another proper crossing or with an endpoint.
When $\pi_{ij}$ and $\pi_{k\ell}$ have a proper crossing, we say that they \emph{cross},
otherwise they do not cross.

We need the following well-known result about the number of crossings
in dense graphs, known as the Crossing Lemma~\cite{AJTAI19829, l-civlsi-03}. The term
\emph{planar drawing} in the Crossing Lemma refers to a drawing where no edge interior
passes through a vertex and all
intersections are proper crossings. Thus it applies to a proper path set. 
\begin{lemma}[Crossing Lemma]
There exists a constant $c>0$, such that every planar drawing of a graph with $n$ vertices
and $m\geq 4n$ edges contains at least $c\frac{m^3}{n^2}$ crossings.
\end{lemma}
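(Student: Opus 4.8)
The plan is to use the classical two-step proof of the Crossing Lemma: a weak linear lower bound on the crossing number coming from Euler's formula, amplified to the cubic bound by random vertex sampling. First I would record the \emph{weak crossing inequality}: for every simple graph $H$ with $n'$ vertices and $m'$ edges, $\mathrm{cr}(H) \geq m' - 3n'$, where $\mathrm{cr}(H)$ is the minimum number of crossings over all drawings of $H$. To see this, take a drawing of $H$ attaining $\mathrm{cr}(H)$ crossings and delete one edge from each crossing pair; what remains is a planar graph on $n'$ vertices, which by Euler's formula has at most $3n'-6$ edges, so $m' - \mathrm{cr}(H) \leq 3n' - 6 < 3n'$. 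Applied to $G$ itself this already shows that any planar drawing of $G$ has at least $\mathrm{cr}(G) \geq m-3n$ crossings --- sufficient when $m = O(n)$, but far too weak when $m \gg n$.

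To amplify, I would fix a drawing $D$ of $G$ with exactly $\mathrm{cr}(G)$ crossings; by the standard normalisation of a crossing-minimal drawing we may assume that no two adjacent edges cross, so that each crossing of $D$ is incident to four \emph{distinct} vertices. Now pick a random subset $V_p \subseteq V(G)$ by including each vertex independently with probability $p \in (0,1]$, let $G_p \mydef G[V_p]$, and let $D_p$ be the drawing of $G_p$ obtained by restricting $D$. Writing $n_p \mydef |V_p|$, $m_p \mydef |E(G_p)|$, and $X_p$ for the number of crossings of $D_p$, linearity of expectation gives $\mathbb{E}[n_p] = pn$, $\mathbb{E}[m_p] = p^2 m$ (an edge survives iff both its endpoints do), and $\mathbb{E}[X_p] = p^4\,\mathrm{cr}(G)$ (a crossing survives iff its four incident vertices do). Applying the weak inequality to $G_p$ pointwise, $X_p \geq m_p - 3 n_p$, and taking expectations yields
\[
p^4\,\mathrm{cr}(G) \;=\; \mathbb{E}[X_p] \;\geq\; \mathbb{E}[m_p] - 3\,\mathbb{E}[n_p] \;=\; p^2 m - 3 p n ,
\]
so $\mathrm{cr}(G) \geq m/p^2 - 3n/p^3$ for every $p \in (0,1]$.

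It remains to optimize the choice of $p$. Because $m \geq 4n$, the value $p \mydef 4n/m$ lies in $(0,1]$, and substituting it gives $\mathrm{cr}(G) \geq \tfrac{m^3}{16 n^2} - \tfrac{3 m^3}{64 n^2} = \tfrac{m^3}{64 n^2}$; since any planar drawing of $G$ contains at least $\mathrm{cr}(G)$ crossings, the lemma follows with $c = 1/64$. The computation is entirely routine, and the only two points that deserve care are: (i) the $p^4$ scaling of $\mathbb{E}[X_p]$, which is exactly why we first pass to a crossing-minimal drawing in which adjacent edges do not cross --- a general planar drawing, such as a proper path set in which two adjacent paths cross, could have crossings incident to only three vertices, which would break the estimate; this is harmless here, since the lemma concerns the crossing \emph{number}, i.e.\ the minimum over all drawings, which is attained by a normalised one; and (ii) that the hypothesis $m \geq 4n$ is precisely what makes $p = 4n/m$ an admissible probability. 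A sharper optimisation, or iterating the argument, improves the constant, but $c = 1/64$ is all that is needed.
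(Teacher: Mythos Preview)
Your proof is correct and is precisely the classical probabilistic argument of Ajtai--Chv\'atal--Newborn--Szemer\'edi and Leighton. The paper does not give its own proof of the Crossing Lemma: it simply states the lemma and cites~\cite{AJTAI19829, l-civlsi-03}, treating it as a known black box. So there is nothing to compare against; your write-up supplies exactly the standard proof that those references contain.

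Two minor remarks. First, your handling of point~(i) is exactly right and worth keeping explicit, since the paper does apply the lemma to drawings (the proper path sets) in which adjacent edges may cross; the lemma still applies because any drawing has at least $\mathrm{cr}(G)$ crossings, and the probabilistic amplification is carried out on a crossing-minimal (hence normalised) drawing. Second, in the weak inequality it is cleaner to phrase the deletion as ``while a crossing remains, delete one of the two edges involved''; each deletion removes at least one crossing, so at most $\mathrm{cr}(H)$ edges are deleted, and the remaining planar subgraph has at least $m'-\mathrm{cr}(H)$ edges. Your version (``delete one edge from each crossing pair'') is slightly ambiguous when one edge participates in several crossings, though the intended meaning is clear.
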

Using the Crossing Lemma we will show that $|E_1|=O(n^{\frac{3+\alpha}{2}})$, as follows.
If $|E_1|=O(n^{\frac{3+\alpha}{2}})$ does not hold, then by the Crossing Lemma 
there must be many crossings between the  edges~$\pi_{ij}\in E_1$. We will
show that this implies that there is a crossing of ply greater than~$\frac{1}{4}n^{\alpha}$, 
thus contradicting that $\ply(\D_1)<\frac{1}{4}n^{\alpha}$. We now make this idea precise.
\begin{lemma} \label{subquadratic}
Let $\ig(\D_1)=(\D_1,E_1)$ be the intersection graph of a set $\D_1$ of disks such
that $\ply(\D_1)<\frac{1}{4}n^{\alpha}$. Then $|E_1|\leq \sqrt{\frac{4}{c}} \cdot n^{\frac{3+\alpha}{2}}$, 
where $c$ is the constant appearing in the Crossing Lemma.
\end{lemma}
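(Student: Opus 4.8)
The plan is to argue by contradiction using the Crossing Lemma together with a counting argument on the high-multiplicity points of the path set $\Pi(\D_1)$. Suppose for contradiction that $|E_1| > \sqrt{4/c}\cdot n^{(3+\alpha)/2}$. Since $\ig(\D_1)$ has at most $n$ vertices and $|E_1| \geq 4n$ (which certainly holds for large $n$ since $(3+\alpha)/2 > 1$), the Crossing Lemma applied to the proper path set $\Pi(\D_1)$ gives at least $c\,|E_1|^3/n^2$ crossings among the paths. Plugging in the lower bound on $|E_1|$, the number of crossings is at least $c \cdot (4/c)^{3/2} n^{3(3+\alpha)/2} / n^2 = (4/c)^{3/2} c \cdot n^{(5+3\alpha)/2}$; in particular it is $\omega(n^{2})$, and more to the point it will exceed the bound on how many crossings a low-ply family can support.

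The key geometric observation is that a crossing point $x$ of two paths $\pi_{ij}$ and $\pi_{k\ell}$ cannot have too many paths passing through it. Indeed, if $x$ lies on $\pi_{ij}$, then $x$ is a point of a shortest path between $p_i$ and $p_j$, so $d(p_i,x) + d(x,p_j) = d(p_i,p_j) \leq r_i + r_j$ (using that $D_i \cap D_j \neq \emptyset$ forces $d(p_i,p_j) \leq r_i+r_j$); hence $d(p_i,x) \leq r_i+r_j$. This does not immediately put $x$ inside a disk, so the correct statement to extract is subtler: for $x$ on $\pi_{ij}$, at least one of $d(p_i,x)\leq r_i$ or $d(p_j,x)\leq r_j$ need not hold, but splitting the path at its midpoint with respect to arc-length-in-the-metric, one endpoint's half of the path does lie within that disk's radius. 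Concretely, let $x$ be any point on $\pi_{ij}$; then $d(p_i,x)\le r_i$ \emph{or} $d(p_j,x)\le r_j$ fails in general, so instead I would define, for each path $\pi_{ij}$, its two ``halves'' and charge $x$ to the disk whose center is closer. A cleaner route: say $x$ on $\pi_{ij}$ is \emph{covered} by $D_i$ if $d(p_i,x)\le r_i$. Since $d(p_i,x)+d(p_j,x)=d(p_i,p_j)\le r_i+r_j$, at least one of the two inequalities $d(p_i,x)\le r_i$, $d(p_j,x)\le r_j$ holds, so every point of $\pi_{ij}$ is covered by $D_i$ or by $D_j$ (or both). Thus a crossing point $x$ that lies on $k$ distinct paths is covered by at least $k$ disks counted with the path-to-disk charging — but a single point can be covered by at most $\ply(\D_1) < \frac14 n^\alpha$ disks, so at most $\frac14 n^\alpha$ paths can pass through any single point, hence any crossing point is incident to at most $\binom{\lfloor n^\alpha/4\rfloor}{2} = O(n^{2\alpha})$ pairs of crossing paths.

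Now count crossings from above. In a proper path set, every crossing is a proper crossing of exactly two paths, and distinct crossings are distinct points; but several crossings may be ``clustered'' only in the sense that many paths pass through one point — which the proper-path assumption actually forbids beyond two paths. So under the proper-path assumption each crossing point carries exactly one crossing pair, and I should instead bound the total number of crossings directly: the number of crossings equals the number of crossing pairs $\{\pi_{ij},\pi_{k\ell}\}$, and I need an upper bound of the form $o(n^{(5+3\alpha)/2})$. Here is where the ply bound feeds in quantitatively. Each path $\pi_{ij}$ is covered by the two disks $D_i, D_j$; assign to $\pi_{ij}$ an ordered incidence with each of $D_i$ and $D_j$. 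For a crossing of $\pi_{ij}$ with $\pi_{k\ell}$ at point $x$, the point $x$ is covered by at least one of $D_i,D_j$ and at least one of $D_k,D_\ell$, so WLOG $x\in D_i\cap D_k$. Thus every crossing can be charged to an intersecting pair of disks $(D_i,D_k)\in E_1$ together with the location $x\in D_i\cap D_k$; and given the pair $(D_i,D_k)$, the crossing paths through a fixed such pair correspond to paths emanating from $p_i$ and $p_k$ respectively that cross. The number of paths incident to a fixed disk $D_i$ is its degree in $\ig(\D_1)$, so summing, the number of crossings is at most $\sum_{(D_i,D_k)\in E_1}\deg(D_i)\deg(D_k)$ — which is too weak.

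The \textbf{main obstacle}, and the step I expect to require the real work, is exactly this: turning the low-ply property into a sufficiently strong upper bound on the number of crossings, strong enough to contradict the Crossing-Lemma lower bound $\Omega(n^{(5+3\alpha)/2})$. The right move is to bound crossings \emph{per point} rather than per disk-pair: fix a point $x$; the set of paths through $x$ all have an endpoint-disk covering $x$, so they correspond to at most $\ply(x) < \frac14 n^\alpha$ disks, hence at most $\binom{n^\alpha/4}{2}$ crossing pairs meet at $x$ — but in a proper path set at most one crossing pair meets at any point, so this is not how the count closes either. Rather, I would integrate/sum the local contribution over an appropriate combinatorial structure: consider the planar subdivision formed by the paths; a path $\pi_{ij}$ of ``length'' (number of crossings along it) $t_{ij}$ contributes $t_{ij}$ to the total, and $\sum_{ij} t_{ij} = 2\cdot(\#\text{crossings})$. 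Along $\pi_{ij}$, every crossing point $x$ is covered by $D_i$ or $D_j$; walk along $\pi_{ij}$ from $p_i$ — while $d(p_i,\cdot)\le r_i$, the point is in $D_i$, and past that point it is in $D_j$ (monotonicity of distance-to-$p_i$ along a shortest path from $p_i$). So the crossings on $\pi_{ij}$ split into an ``$i$-part'' lying in $D_i$ and a ``$j$-part'' lying in $D_j$. Hence each crossing lies in some disk $D_i$ and is a crossing of two paths both restricted to lie within $D_i$... no: only one of the two crossing paths is guaranteed to be locally inside $D_i$ at $x$. The clean final statement I will aim for: for a fixed disk $D_i$, the portions of paths lying inside $D_i$ form a set of curves inside $D_i$ emanating from $p_i$, so they pairwise cross $O(\deg(D_i)^2)$ times, but that's again quadratic in degree. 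So the decisive inequality must instead use: total crossings $\le \sum_i (\text{crossings involving an }i\text{-covered arc}) \le \sum_i \ply\text{-type bound}$. I would push through the following: since each of the $\le n$ disks is covered at each of its points by $<\frac14 n^\alpha$ disks, a global double-counting of (point $x$, pair of paths crossing at $x$) against (disk $D_i$ covering $x$, pair of $i$-arcs through $x$) yields crossings $\le \frac14 n^\alpha \cdot (\#\text{arcs}) = \frac14 n^\alpha\cdot 2|E_1|$, giving $\#\text{crossings} \le \frac12 n^\alpha |E_1|$. Combining with the Crossing Lemma: $c\,|E_1|^3/n^2 \le \frac12 n^\alpha |E_1|$, hence $|E_1|^2 \le \frac{1}{2c} n^{2+\alpha}$, so $|E_1| \le \sqrt{1/(2c)}\, n^{(2+\alpha)/2}$ — which is even better than claimed. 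The honest version likely loses constants and uses $\binom{\ply}{2}$ in place of $\ply$, landing at $|E_1|^2 \le \frac{4}{c} n^{2+2\alpha}\cdot(\dots)$ matching the stated $|E_1|\le\sqrt{4/c}\, n^{(3+\alpha)/2}$ after also invoking that there are at most $O(n^2)$ distinct crossing \emph{points} — i.e. combining a per-point bound of $O(n^{2\alpha})$ crossing-pairs with the trivial bound of $O(n^2)$ points to get $O(n^{2+2\alpha})$ total crossings, then $c|E_1|^3/n^2 \le O(n^{2+2\alpha})$ gives $|E_1| = O(n^{(4+2\alpha)/3})$, and one reconciles with the claim by the specific value $\alpha=1/5$... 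I will have to track the exponent arithmetic carefully, and getting the per-crossing-point multiplicity bound from the metric/ply structure airtight is the crux.
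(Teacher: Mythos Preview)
Your setup is correct: assume $|E_1|$ is large, invoke the Crossing Lemma to get many crossings, then derive a point of high ply. You also correctly observe that every point of $\pi_{ij}$ lies in $D_i$ or $D_j$. But the counting step that closes the contradiction is genuinely missing, and none of the attempts you sketch actually works.

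Your cleanest attempt claims ``crossings $\le \frac14 n^\alpha \cdot 2|E_1|$'' by double-counting against arcs. This would require that each half-edge carries at most $\frac14 n^\alpha$ crossings, which is simply false: a single half-edge $h_{ij}\subset D_i$ can be crossed by arbitrarily many paths $\pi_{k\ell}$ whose covering disks $D_k$ or $D_\ell$ are all distinct. The ply bound controls how many disks contain a \emph{fixed point}, not how many paths cross a fixed arc. Your other attempts (per-point multiplicity, per-edge degree products) you already recognize as too weak or inapplicable under the proper-path assumption.

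The paper's argument uses a non-local charging that you did not find. Split each $\pi_{ij}$ at a point $m_{ij}\in D_i\cap D_j$ into half-edges $h_{ij}\subset D_i$, $h_{ji}\subset D_j$. For a crossing $x=h_{ij}\cap h_{k\ell}$, the crucial observation (via the triangle inequality and the fact that $d(p_i,x)+d(x,m_{ij})\le r_i$) is: if $d(x,m_{k\ell})\le d(x,m_{ij})$ then \emph{every} crossing on the subpath $\pi_{k\ell}[x,m_{k\ell}]$ lies in $D_i$. So each crossing $x$ does not just witness itself in some disk---it drags a whole run of $\lambda_2(x)$ other crossings into $D_i$. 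One then lower-bounds the total ply $K=\sum_{x\in\X}\ply(x)$ by roughly $\frac{1}{2n}\sum_{x}\min\{\lambda_1(x),\lambda_2(x)\}$, discards the first $n^{1+\alpha}$ crossings along each half-edge so that surviving labels exceed $n^{1+\alpha}$, and combines with the Crossing-Lemma bound $|\X|>2|\E_1|n^{1+\alpha}$ to get $K>\frac{|\X|}{4}n^\alpha$, forcing some crossing of ply $\ge\frac14 n^\alpha$. The exponent $(3+\alpha)/2$ comes precisely from balancing the threshold $n^{1+\alpha}$ against the $1/n$ loss from overcounting by degree; your local attempts cannot produce this trade-off.
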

%
\begin{figure}
\begin{center}
\includegraphics{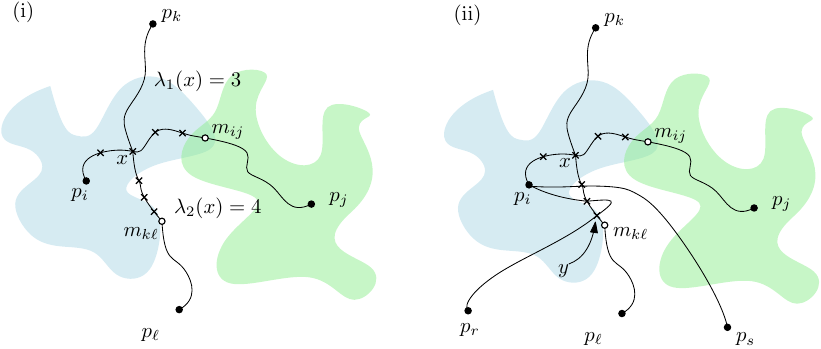}
\end{center}
\caption{(i) An example of a labeling for a crossing $x\in \X$. (ii) Here, the crossing $y\in \pi_{ij}[x,m_{k\ell}]$ is assigned to $D_i$ a total of four times. } 
\label{fig:labeling}
\end{figure}

\begin{proof}
Consider the proper path set $\Pi(\D_1)$ and let $\X$ be the set of crossings between 
the paths in~$\Pi(\D_1)$.
Assume for a contradiction that $|E_1|>\sqrt{\frac{4}{c}}\cdot n^{\frac{3+\alpha}{2}}$.
We will show that then there has to exist a crossing $x\in \X$ of ply at 
least $\frac{n^{\alpha}}{4}$, which contradicts that $\ply(\D_1)<\frac{1}{4}n^{\alpha}$.

We start by giving a lower bound on the total ply of all crossings in the drawing. 
To this end, we split each edge $\pi_{ij}\in E_1$ in two \emph{half-edges} as follows.
For two points $x,y\in\pi_{ij}$, let $\pi_{ij}[x,y]$ denote the subpath of $\pi_{ij}$
between~$x$ and~$y$. Recall that $p_i$ is the center of disk~$D_i$. 
We now pick an arbitrary point $m_{ij} \in \pi_{ij} \cap \left( D_i\cap D_j \right)$ and split
$\pi_{ij}$ at $m_{ij}$ into a half-edge $\pi_{ij}[p_i,m_{ij}]$ connecting $p_i$ to~$m_{ij}$
and a half-edge~$\pi_{ij}[p_j,m_{ij}]$ connecting $p_j$ to~$m_{ij}$. 
For brevity, we will denote these two half-edges by $h_{ij}$ and $h_{ji}$, respectively. 
Clearly, each half-edge has length at most the radius of the disk it lies in,
and so $h_{ij}\subset D_i$ and $h_{ji}\subset D_j$. 
We denote the resulting set of half-edges by~$\E_1$. 

We label each crossing $x\in \X$ with an unordered pair of integers $\{\lambda_1(x),\lambda_2(x)\}$,
defined as follows: if $x$ is the crossing between the half-edges $h_{ij},h_{k\ell}$, 
then $\lambda_1(x)$ is the number of crossings contained in
$\pi_{ij}[x,m_{ij}]$ and $\lambda_2(x)$ is the number of crossings contained in $\pi_{k\ell}[x,m_{k\ell}]$; see Fig.~\ref{fig:labeling}(i).
This labeling is useful to obtain a rough bound on the total ply of all crossings, 
because of the following observation, which immediately follows from the triangle inequality.
\obsinproof{1}{Consider a crossing $x= h_{ij} \cap h_{k\ell}$. If
$d(x,m_{k\ell})\leq  d(x,m_{ij})$ then
all crossings $y\in \pi_{k\ell}[x,m_{k\ell}]$ are contained in $D_i$, 
and otherwise all crossings $y\in \pi_{ij}[x,m_{ij}]$ are contained in $D_k$.}
Let $K \mydef \sum_{x\in \X} \ply(x)$ denote the total ply of all crossings. 
The following claim bounds $K$ in terms of the labels $\{\lambda_1(x),\lambda_2(x)\}$.
\claiminproof{1}{$K > \frac{1}{2|\D_1|} \sum_{x\in \X} \min\{\lambda_1(x),\lambda_2(x)\}$.}
\begin{proofinproof}
Define $K(D_i) \mydef \left|\strut\{ x\in\X : x\in D_i \}\right|$ to be the contribution of $D_i$ to the total
ply~$K$, and note that 
\[
K = \sum_{x\in\X} \ply(x) = \sum_{x\in\X} \left|\strut\{ D_i \in \D_1 : x\in D_i \}\right| 
   = \sum_{D_i \in \D_1} \left|\strut\{ x\in \X : x\in D_i \}\right| = \sum_{D_i\in\D_1} K(D_i) .
\]
Now consider a crossing $x=h_{ij}\cap h_{k\ell}$. If $d(x,m_{k\ell})\leq  d(x,m_{ij})$ 
then we assign $x$ to~$D_i$, and otherwise we assign $x$ to $D_k$. Let $\X(D_i)$
be the set of crossings assigned to~$D_i$. By Observation~1 and the definition of the label $\{\lambda_1(x),\lambda_2(x)\}$,
the disk~$D_i$ contains at least $\min\{\lambda_1(x),\lambda_2(x)\}$ crossings
$y\in \pi_{k\ell}[x,m_{k\ell}]$, for any $x=h_{ij}\cap h_{k\ell}\in \X(D_i)$.
Thus, summing over all crossings 
$x\in\X(D_i) \cap h_{ij}$ and all half-edges $h_{ij}$ incident to~$D_i$, we find that
\[
K(D_i) 
\geq \frac{1}{2\deg(D_i)} \sum_{h_{ij}} \sum_{x\in \X(D_i)\cap h_{ij}} \min\{\lambda_1(x),\lambda_2(x)\}
> \frac{1}{2|\D_1|} \sum_{x\in \X(D_i)} \min\{\lambda_1(x),\lambda_2(x)\},
\]
where $\deg(D_i)$ denotes the degree of $D_i$ in $\ig(\D_1)$.
The factor $\frac{1}{2\deg(D_i)}$ arises because a crossing $y\in h_{k\ell}$ 
can be counted up to $2\deg(D_i)$ times in the expression $\sum_{x\in \X(D_i)} \min\{\lambda_1(x),\lambda_2(x)\}$, namely at most twice for every half-edge 
incident to $D_i$; see Fig.~\ref{fig:labeling}(ii). (Twice, because a pair of
paths in a proper path set may cross twice.)
Since each crossing is assigned to exactly one set $\X(D_i)$, we obtain
\begin{align*}
K &= \sum_{D_i\in\D_1} K(D_i) 
  > \sum_{D_i\in \D} \left( \frac{1}{2|\D_1|} \sum_{x\in \X(D_i)} \min \{\lambda_1(x),\lambda_2(x)\} \right)\\
  &=  \frac{1}{2|\D_1|} \sum_{x\in \X} \min\{\lambda_1(x),\lambda_2(x)\} . 
\end{align*}
\end{proofinproof}
From the Crossing Lemma and our initial assumption that 
$|E_1|>\sqrt{\frac{4}{c}}\cdot n^{\frac{3+\alpha}{2}}$, we have that
\begin{equation}
|\X| \geq c \frac{|E_1|^3}{n^2} >4|E_1| n^{1+\alpha} = 2|\E_1| n^{1+\alpha}. \label{eq:X}
\end{equation}
In order to get a rough bound for $\sum_{x\in\X} \min\{\lambda_1(x),\lambda_2(x)\}$, 
we will ignore crossings with small labels, while ensuring that we don't ignore too 
many crossings in total. More precisely, for every half-edge $h_{ij}$, we disregard
its first  
$n^{1+\alpha}$ crossings, starting from the one closest to~$m_{ij}$. 
We let $\X^*$ denote the set of remaining crossings. 
Note that $|\X^*| \geq |\X|-|\E_1|n^{1+\alpha}$, and $\min\{\lambda_1(x),\lambda_2(x)\} \geq n^{1+\alpha}$ for every $x\in \X^*$. Therefore 
\[
K  > \frac{1}{2|\D_1|} \sum_{x\in \X^* } \min\{\lambda_1(x),\lambda_2(x)\} 
    = \frac{1}{2|\D_1|} \cdot |\X^*| \cdot n^{1+\alpha}
    \geq  \frac{\left(|\X|-|\E_1|n^{1+\alpha}\right)n^{1+\alpha}}{2|\D_1|}
    \geq  \frac{|\X|}{4}n^{\alpha}.
\]
This means that there exists a crossing $x\in \X$ of ply at least $\frac{1}{4}n^{\alpha}$, which contradicts the condition of the lemma and thus
finishes the proof. 
\end{proof}

\mypara{Step~3: Applying the separator theorem for string graphs.}
Lee's separator theorem for string graphs \cite{Lee-string-sep} states that any string
graph on $m$ edges admits a balanced separator of size $O(\sqrt{m})$. 
It is well known (and easy to show) that any intersection graph of
path-connected sets in the plane is a string graph. Hence, we can apply
Lee's result to $\ig(\D_1)$ which, as we have just shown, has $O(n^{\frac{3+\alpha}{2}})$
edges.
Thus, $\ig(\D_1)$ has a separator of size $O(n^{\frac{3+\alpha}{4}})$. If we add the vertices
of this separator as singletons to our clique-based separator $\calS$ then, together with
the cliques added in Step~1, we obtain a separator consisting of $O(n^{\frac{3+\alpha}{4}} + n^{1-\alpha})$ cliques. Picking $\alpha=1/5$, and anticipating the
extension to the case where $\Pi(\D_1)$ is not a proper path set,
we obtain the following result.
\begin{proposition}
Let $d$ be a well-behaved shortest-path metric on a closed and path-connected subset $F\subset\Reals^2$ and let 
$\D$ be a set of $n$ geodesic disks with respect to the metric~$d$. 
Then $\ig(\D)$ has a balanced clique-based separator consisting of $O(n^{4/5})$ cliques.
\end{proposition}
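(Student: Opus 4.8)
The plan is to assemble the three steps already developed in this section into a single statement, and then to discharge the remaining technical assumption, namely that $\Pi(\D_1)$ is a proper path set. For the main line of argument, Steps~1--3 are already complete: Step~1 reduces the ply of $\D_0$ to below $\frac14 n^{\alpha}$ at the cost of adding at most $4n^{1-\alpha}$ cliques to $\sep$; Lemma~\ref{subquadratic} shows the resulting graph $\ig(\D_1)$ has at most $\sqrt{4/c}\cdot n^{(3+\alpha)/2}$ edges; and Lee's separator theorem~\cite{Lee-string-sep}, applicable because an intersection graph of path-connected planar sets is a string graph, then yields a balanced separator of $\ig(\D_1)$ of size $O(n^{(3+\alpha)/4})$. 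Adding those separator vertices as singleton cliques to $\sep$ gives a balanced clique-based separator for $\ig(\D_0)=\ig(\D)$ consisting of $O(n^{(3+\alpha)/4}+n^{1-\alpha})$ cliques. Setting $\alpha=1/5$ balances the two terms, since $(3+1/5)/4 = 4/5 = 1-1/5$, so the total is $O(n^{4/5})$. One small bookkeeping point is that removing a clique $\D_0(p)$ from the graph and then separating the rest preserves the separator property: the removed cliques are simply unioned into $\sep$, and any balanced split of $\ig(\D_1)\setminus S$ into $A,B$ is also a valid split of $\ig(\D)\setminus(\text{removed cliques}\cup S)$, with balance only improved.

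The part that still requires work — and the step I expect to be the main obstacle — is removing the assumption that $\Pi(\D_1)$ is a proper path set. In the full generality of the paper (an arbitrary well-behaved shortest-path metric, with non-unique shortest paths and boundaries of geodesic disks crossing arbitrarily often), the chosen shortest paths $\pi_{ij}$ may overlap along subpaths, may touch without crossing, and several crossings may coincide at a single point. The plan here is first to perturb or otherwise regularize the drawing so that the Crossing Lemma still applies. The natural approach is: (i) argue that shortest paths can be chosen so that any two of them share at most finitely many ``crossing episodes'', using the ``well-behaved'' hypotheses from Section~\ref{se:proper-path-system}; (ii) replace each maximal shared subpath or tangential contact by a slightly perturbed arc so that all intersections become transversal proper crossings, without changing which pairs of disks are adjacent and without increasing the ply at any point by more than a controlled amount; and (iii) observe that the key metric inequality driving Observation~1 and Claim~1 — namely that a half-edge $h_{ij}$ lies inside $D_i$ because it is a shortest path of length at most $r_i$ — is robust under this perturbation, so the counting argument of Lemma~\ref{subquadratic} goes through with the same asymptotics (the constants may worsen, but only by $O(1)$ factors). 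Since the proposition is stated with an implicit ``well-behaved'' qualifier precisely to license this regularization, I expect the paper to devote a separate subsection to making this rigorous.

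Assuming that regularization step is in place, the proof of the proposition is then immediate: collect the at most $4n^{4/5}$ cliques from Step~1 together with the $O(n^{4/5})$ singleton cliques from the string-graph separator of $\ig(\D_1)$; this is a clique-based separator of $\ig(\D)$ of size $O(n^{4/5})$, and it is balanced because Step~3 produces a balanced separator of $\ig(\D_1)$ and removing cliques only shrinks the two sides. One should also check that the cost-function refinement mentioned in the introduction — $\sum_i \gamma(|C_i|) = O(n^{4/5})$ with $\gamma(t)=\log t$ — is respected: the singletons contribute $\gamma(1)=0$ each, and the $O(n^{4/5})$ cliques from the preprocessing each contribute $O(\log n)$, which is absorbed into the $O(n^{4/5+\eps})$-type bound but not literally $O(n^{4/5})$; for the bare proposition as stated, counting cliques suffices, and the logarithmic refinement is deferred to the bootstrapping section where the exponent already carries an $\eps$. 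Thus the main substantive content is entirely in Lemma~\ref{subquadratic} and the (deferred) justification of the proper-path-set reduction; the proposition itself is the arithmetic consequence $\alpha = 1/5$.
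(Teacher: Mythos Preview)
Your proposal is correct and follows essentially the same route as the paper: the proposition is indeed just the arithmetic synthesis of Steps~1--3 with $\alpha=1/5$, and the proper-path-set assumption is discharged in a later subsection via a perturbation lemma (Lemma~\ref{lem:perturb}) exactly along the lines you anticipate. One minor difference worth noting is that the paper's perturbation argument does not track ply along the perturbed paths as you suggest in~(ii)--(iii); instead it maps each crossing of the perturbed drawing back to a point on the intersection of the \emph{original} shortest paths, so that Observation~1 and the ply bound can be applied verbatim to the unperturbed half-edges---this avoids having to argue that $h_{ij}\subset D_i$ survives perturbation.
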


\subsection{Bootstrapping} 
\label{sec:improving}
We now describe a bootstrapping mechanism to improve the size of our clique-based separator. 
We first explain how to apply the mechanism once, to obtain a clique-based separator 
consisting of $O(n^{10/13})$ cliques. Then we apply the method repeatedly to obtain a separator
consisting of $O(n^{3/4+\varepsilon})$ cliques.

\subparagraph{The basic bootstrapping mechanism.}
The idea of our bootstrapping mechanism is as follows. 
After reducing the ply in Step~1, we argued that the number of remaining edges is subquadratic. 
We can actually reduce the number of edges and crossings even further, by reducing the maximum degree 
of the graph $\ig(\D_1)$. To this end, after reducing the ply, we remove from $\ig(\D_1)$ all 
vertices~$D_i$ such that $\deg(D_i)\geq n^\beta$, for some constant $0<\beta<1$ whose value will 
be determined later. All these high degree vertices are placed in our clique-based separator as singletons. 
If $h$ is the number of these vertices, then $h\cdot n^\beta \leq 2 |E_1|$ and so $h= O( n^{\frac{3+\alpha}{2}-\beta})$. 
We denote the resulting graph by $\ig(\D_2) = (\D_2,E_2)$ and let $\E_2$ be the half-edges corresponding to the edges in $E_2$.

To minimize the number of cliques in our separator, the constants $\alpha$ and $\beta$ should be chosen to satisfy the equation 
\begin{equation}
\frac{3+\alpha}{2}-\beta=1-\alpha.  \label{eq;alpha-vs-beta} 
\end{equation}
Next, we show that after reducing the maximum degree, the number of edges has also decreased. 
\begin{lemma}\label{subquadratic-better}
    For the graph $\ig(\D_2)$ defined as above, it holds that $|E_2| \leq \sqrt{\frac{4}{c}} n^{1+\frac{\alpha+\beta}{2}}$, where $c$ is the constant appearing in the Crossing Lemma.
\end{lemma}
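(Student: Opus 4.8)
The proof will mirror the argument for Lemma~\ref{subquadratic}, but now exploiting the extra structure gained by removing high-degree vertices. Assume for contradiction that $|E_2| > \sqrt{4/c}\, n^{1+\frac{\alpha+\beta}{2}}$. By the Crossing Lemma applied to the proper path set $\Pi(\D_2)$ (which, being a subset of $\Pi(\D_1)$, is still proper), the number of crossings satisfies $|\X_2| \geq c|E_2|^3/n^2 > 4|E_2|\, n^{1+\alpha+\beta} = 2|\E_2|\, n^{1+\alpha+\beta}$. So compared to the first bound, the ``budget'' of crossings available is larger by a factor of $n^\beta$, and this is exactly what we will cash in.

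The key point is where the maximum-degree bound enters. In the proof of Claim~1 inside Lemma~\ref{subquadratic}, the bound on $K(D_i)$ carried a factor $\frac{1}{2\deg(D_i)}$, which was then bounded below by $\frac{1}{2|\D_1|}$ since trivially $\deg(D_i) \leq |\D_1| \leq n$. Now, because we removed all vertices of degree $\geq n^\beta$, we have $\deg(D_i) < n^\beta$ for every $D_i \in \D_2$, so the same computation yields
\[
K \;>\; \frac{1}{2n^\beta} \sum_{x \in \X_2} \min\{\lambda_1(x),\lambda_2(x)\},
\]
a stronger bound by a factor of $n^{1-\beta}$. First I would re-run the half-edge construction, the labeling $\{\lambda_1(x),\lambda_2(x)\}$, and Observation~1 verbatim for $\D_2$ in place of $\D_1$; none of that used anything beyond the proper-path-set property and the triangle inequality, so it transfers without change. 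Then I would redo the ``ignore crossings with small labels'' trick: for each half-edge $h_{ij}\in\E_2$ discard its first $n^{1+\alpha+\beta}$ crossings nearest $m_{ij}$, obtaining $\X_2^*$ with $|\X_2^*| \geq |\X_2| - |\E_2|\, n^{1+\alpha+\beta}$ and $\min\{\lambda_1(x),\lambda_2(x)\} \geq n^{1+\alpha+\beta}$ for $x\in\X_2^*$. Plugging into the improved Claim~1 bound:
\[
K \;>\; \frac{1}{2n^\beta}\,|\X_2^*|\, n^{1+\alpha+\beta}
\;\geq\; \frac{\bigl(|\X_2| - |\E_2|\,n^{1+\alpha+\beta}\bigr)\,n^{1+\alpha}}{2}
\;\geq\; \frac{|\X_2|}{4}\, n^{1+\alpha},
\]
where the last step uses $|\X_2| \geq 2|\E_2|\,n^{1+\alpha+\beta}$ so that the subtracted term is at most half of $|\X_2|$. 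Since $K = \sum_{x\in\X_2}\ply(x)$ and $K > \frac{|\X_2|}{4} n^{1+\alpha} \geq \frac{|\X_2|}{4} n^\alpha$, some crossing has ply at least $\frac{1}{4}n^\alpha$ — in fact with room to spare — contradicting $\ply(\D_2) \leq \ply(\D_1) < \frac14 n^\alpha$.

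I do not expect a genuine obstacle here: the argument is a bookkeeping variant of Lemma~\ref{subquadratic}, and the only substantive new ingredient is the degree bound $\deg(D_i) < n^\beta$ replacing the trivial $\deg(D_i) \leq n$. The one place to be careful is the choice of the discard threshold $n^{1+\alpha+\beta}$ (rather than $n^{1+\alpha}$): it must be large enough that the surviving labels swamp the $\frac{1}{2n^\beta}$ denominator — giving the $n^{1+\alpha}$ lower bound on the average label contribution — yet the total number of discarded crossings $|\E_2|\, n^{1+\alpha+\beta}$ must stay below, say, half of the crossing budget $|\X_2| \geq 2|\E_2|\,n^{1+\alpha+\beta}$, which it does by construction. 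One should also note that $\D_2$ inherits $\ply(\D_2) \leq \ply(\D_1) < \frac14 n^\alpha$ since $\D_2 \subseteq \D_1$, and that the constant $\sqrt{4/c}$ comes out correctly because the $+\beta$ in the exponent of the crossing budget exactly offsets the $n^\beta$ denominator gained in Claim~1. Once the lemma holds, Lee's separator theorem applied to $\ig(\D_2)$ with $O(n^{1+\frac{\alpha+\beta}{2}})$ edges gives a separator of size $O(n^{\frac{1}{2}+\frac{\alpha+\beta}{4}})$, which together with the $O(n^{1-\alpha})$ cliques from Step~1 and the $O(n^{\frac{3+\alpha}{2}-\beta})$ singletons from the degree reduction, optimized via \eqref{eq;alpha-vs-beta}, yields the claimed $O(n^{10/13})$ bound.
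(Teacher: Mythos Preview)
Your approach is exactly the paper's, but you have an arithmetic slip in the crossing-number bound that breaks the argument as written. From $|E_2| > \sqrt{4/c}\,n^{1+\frac{\alpha+\beta}{2}}$ one gets $|E_2|^2 > (4/c)\,n^{2+\alpha+\beta}$, and hence
\[
|\X_2| \;\geq\; c\,\frac{|E_2|^3}{n^2} \;>\; 4|E_2|\,n^{\alpha+\beta} \;=\; 2|\E_2|\,n^{\alpha+\beta},
\]
not $2|\E_2|\,n^{1+\alpha+\beta}$. (In Lemma~\ref{subquadratic} the squared exponent was $3+\alpha$, giving $n^{1+\alpha}$ after dividing by $n^2$; here the squared exponent is only $2+\alpha+\beta$.) With your discard threshold of $n^{1+\alpha+\beta}$ crossings per half-edge, the total discarded, $|\E_2|\,n^{1+\alpha+\beta}$, can exceed $|\X_2|$ by a factor of~$n$, so $\X_2^*$ may be empty and the chain of inequalities collapses.

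The fix is simply to discard $n^{\alpha+\beta}$ crossings per half-edge instead. Then $|\X_2^*|\geq |\X_2| - |\E_2|\,n^{\alpha+\beta} \geq |\X_2|/2$, each surviving label is at least $n^{\alpha+\beta}$, and
\[
K \;>\; \frac{1}{2n^\beta}\,|\X_2^*|\,n^{\alpha+\beta} \;\geq\; \frac{|\X_2|}{4}\,n^{\alpha},
\]
which is precisely the contradiction needed (no ``room to spare'': you recover exactly $\tfrac14 n^{\alpha}$, not $\tfrac14 n^{1+\alpha}$). After this correction your argument and the paper's coincide.
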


\begin{proof}
    The proof is essentially the same as the proof of Lemma \ref{subquadratic}, so we only mention the key differences. 
    We assume for a contradiction that $|E_2| > \sqrt{\frac{4}{c}} n^{1+\frac{\alpha+\beta}{2}}$. Then, we lower bound the number of crossings (denoted now by $\X_2$) as 
    \[
|\X_2| \geq c \frac{|E_2|^3}{n^2} >2|\E_2|n^{\alpha+\beta}.
\]

We then proceed with the same labeling procedure as in Step 2 in order to lower bound the total ply $K_2$ of all crossings. A key difference is that now we only need to divide by $2n^\beta$, since $n^\beta$ is the maximum degree of $\ig(\D_2)$: 
\[
K_2 \geq \frac{1}{2n^\beta} \sum_{x\in \X_2} \min\{\lambda_1(x),\lambda_2(x)\}.
\]

To get an estimate for $K_2$, we remove from every half-edge, its first $n^{\alpha+\beta}$ crossings. In this way there are at least $|\X_2|-|\E_2| n^{\alpha+\beta}$ crossings remaining, each having a minimum label larger than $n^{\alpha+\beta}$. Therefore 
\[
K_2  >  \frac{\left(|\X_2|-|\E_2|n^{\alpha+\beta}\right)n^{\alpha+\beta}}{2n^\beta} 
    >  \frac{|\X_2|}{4}n^{\alpha}
\]
so we again get a contradiction, since there has to exist a crossing with ply at least~$\frac{1}{4}n^{\alpha}$.
\end{proof}
Step 3 now gives us that $\ig(\D_2)$ has a separator of size $O(n^{\frac{2+\alpha+\beta}{4}})$, 
whose vertices we add to our clique-based separator as singletons. In order to minimize the number of cliques in our separator,
we now have the equation
\begin{equation}
\frac{2+\alpha+\beta}{4}=1-\alpha. \label{eq2}
\end{equation}
Solving the system of Equations~(\ref{eq;alpha-vs-beta}) and~(\ref{eq2}) gives $\alpha=3/13$ and $\beta=11/13$. 
This solution corresponds to a clique-based separator of size $O(n^{10/13})$. 

\subparagraph{Repeated bootstrapping.}
The method we just described can be applied repeatedly: After reducing the 
maximum degree to $n^{\beta}$, we showed that $|E_2|=O(n^{1+\frac{\alpha+\beta}{2}})$. 
This allows us to reduce the maximum degree even more, by placing 
vertices of degree at least $n^\gamma$ in the clique-based separator,  for some~$\gamma<\beta$. 
If we repeat this process $O(\log (1/\eps))$ times, we can obtain a clique-based separator of size $O(n^{3/4+\eps})$,
as is shown in the following theorem. 
\begin{theorem} \label{thm:separator}
     Let $d$ be a well-behaved shortest-path metric on a closed and path-connected subset $F\subset\Reals^2$, 
     and let $\D$ be a set of $n$ geodesic disks with respect to the metric~$d$. 
    Then $\ig(\D)$ has a balanced clique-based separator consisting of $O(n^{\frac{3}{4}+\varepsilon})$ cliques.
\end{theorem}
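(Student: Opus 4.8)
The plan is to iterate the bootstrapping argument of Lemma~\ref{subquadratic-better} a constant number of times, lowering the degree threshold at each step, while keeping the ply exponent fixed at $\alpha=1/4$ (so that Step~1 contributes at most $4n^{1-\alpha}=4n^{3/4}$ cliques and leaves a set $\D_1$ with $\ply(\D_1)<\tfrac14 n^{1/4}$). The one ingredient I would isolate first is a single lemma subsuming Lemmas~\ref{subquadratic} and~\ref{subquadratic-better}: \emph{if $\D'\subseteq\D$ satisfies $\ply(\D')<\tfrac14 n^{\alpha}$ and $\ig(\D')$ has maximum degree less than $n^{\beta'}$, then $\ig(\D')$ has at most $\sqrt{4/c}\cdot n^{1+(\alpha+\beta')/2}$ edges.} Its proof is verbatim that of Lemma~\ref{subquadratic-better}: the only inputs are the ply bound $\tfrac14 n^\alpha$ and the fact that in the crossing-counting estimate one divides by the \emph{current} maximum degree $n^{\beta'}$ rather than by $|\D'|$ (the case $\beta'=1$, with a vacuous degree bound, recovers Lemma~\ref{subquadratic}).

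With this in hand I would run the iteration. Set $\beta_0=1$, so that the lemma gives $|E_1|=O(n^{1+(\alpha+\beta_0)/2})=O(n^{13/8})$. For $k=1,\dots,L$, delete from $\ig(\D_k)$ every vertex of degree at least $n^{\beta_k}$, placing these vertices in $\sep$ as singletons, and call the remaining graph $\ig(\D_{k+1})$. There are at most $2|E_k|/n^{\beta_k}$ deleted vertices; moreover $\ig(\D_{k+1})$ still has ply below $\tfrac14 n^\alpha$ (deleting disks cannot raise the ply) and now has maximum degree below $n^{\beta_k}$, so the lemma yields $|E_{k+1}|\leq\sqrt{4/c}\cdot n^{1+(\alpha+\beta_k)/2}$. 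To keep the contribution of round~$k$ at $O(n^{3/4})$ we need $1+(\alpha+\beta_{k-1})/2-\beta_k\leq 3/4$, which is ensured by the choice $\beta_k=\tfrac38+\tfrac12\beta_{k-1}$. This recursion satisfies $\beta_k-\tfrac34=\tfrac12(\beta_{k-1}-\tfrac34)$, hence $\beta_k=\tfrac34+2^{-k-2}$: a strictly decreasing sequence converging geometrically to $3/4$ (so the thresholds really do decrease, $\beta_1=7/8>\beta_2=13/16>\cdots$), and each round costs $O(n^{3/4})$ since the exponent $1+(\alpha+\beta_{k-1})/2-\beta_k$ equals $3/4$ exactly.

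To finish, I would apply Lee's separator theorem to $\ig(\D_{L+1})$, which has $O(n^{1+(\alpha+\beta_L)/2})$ edges and hence a balanced separator of size $O(n^{(2+\alpha+\beta_L)/4})=O(n^{(3+2^{-L-2})/4})=O(n^{3/4+2^{-L-4}})$, whose vertices are added to $\sep$ as singletons. Choosing $L=\lceil\log_2(1/\eps)\rceil$ makes this $O(n^{3/4+\eps})$; since $L$ is a constant depending only on $\eps$, adding the $O(n^{3/4})$ cliques of Step~1 and the $L$ rounds of $O(n^{3/4})$ singletons keeps the total at $O(n^{3/4+\eps})$. The separator is balanced: the two parts produced by Lee's theorem have size at most $\tfrac23|\D_{L+1}|\leq\tfrac23 n$, and every deleted vertex (from a Step~1 clique or a bootstrap round) lies in $\sep$, so any separation of $\ig(\D_{L+1})$ is also a separation of $\ig(\D)$ with the same two parts. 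The assumption that $\Pi(\D_1)$ is a proper path set is removed exactly as in the basic construction.

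The only step I expect to require genuine care is confirming that the isolated lemma is truly reusable, i.e.\ that the proof of Lemma~\ref{subquadratic-better} uses nothing about $\ig(\D_2)$ beyond the ply bound and the maximum-degree bound — in particular that the ``remove the first $n^{\alpha+\beta'}$ crossings per half-edge'' estimate still leaves at least half the crossings and that the division factor is the fresh degree bound $n^{\beta_k}$, not $n^{\beta_{k-1}}$. Once that is pinned down, the remainder (solving the recursion, verifying its geometric convergence, and checking that $O(\log(1/\eps))$ rounds suffice) is routine bookkeeping.
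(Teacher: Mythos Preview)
Your proof is correct and follows the same iterated-bootstrapping strategy as the paper, but with a cleaner parameterization. The paper treats all exponents---including the ply exponent $\alpha_1$---as unknowns, sets up a linear system of $k$ equations forcing every contribution to equal $n^{1-\alpha_1}$, and solves (by multiplying the $i$th equation by $2^{i-1}$ and summing) to obtain $\alpha_1=(2^k-1)/(2^{k+2}-3)$, which tends to $1/4$ as $k\to\infty$. You instead fix $\alpha=1/4$ from the outset and let only the degree thresholds vary, obtaining the single contraction $\beta_k=\tfrac38+\tfrac12\beta_{k-1}$ with an explicit closed form. Your version avoids the linear-system manipulation and makes the geometric convergence to $3/4$ immediate; the paper's version balances all terms exactly at each finite $k$, which may yield marginally better constants but no asymptotic gain. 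Your concern about the reusability of the edge-bound lemma is well placed but resolves as you anticipate: the proof of Lemma~\ref{subquadratic-better} invokes only the current ply bound $\tfrac14 n^\alpha$ and the current maximum-degree bound, so it applies verbatim at every round with the fresh $\beta_k$ in the denominator.
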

\begin{proof}
    Let $\sep$ denote our clique-based separator. As before, we start by reducing the ply of the subdivision, 
    by finding points of ply at least $\frac{n^{\alpha_1}}{4}$, for some constant $0<\alpha_1<1$,
    and placing the corresponding cliques into~$\sep$. Our separator now has a size of $O(n^{1-\alpha_1})$. Let $\D_1$ denote the remaining set of disks and let $\ig(\D_1)=(\D_1,E_1)$ denote the resulting subgraph after this step. Next, we repeatedly apply the following procedure for $i=2,\dotsc,k$: we reduce the maximum degree of $\ig(\D_{i-1})$ to $n^{\alpha_i}$ (for some positive constant $\alpha_i<
    \alpha_{i-1}$) by removing from $\ig(\D_{i-1})$ vertices of degree at least $n^{\alpha_i}$ and we place them in $\sep$ as singletons. We denote by $\D_i$ the remaining set of disks and let $\ig(\D_{i})=(\D_i,E_i)$ be the resulting subgraph. 
    
    Lemma~\ref{subquadratic-better} gives us that $|E_2| \leq \sqrt{\frac{4}{c}} n^{1+\frac{\alpha_{1}+\alpha_2}{2}}$, for $i=2,\dotsc,k$. 
    In exactly the same way we can show that $|E_{i}| \leq \sqrt{\frac{4}{c}} n^{1+\frac{\alpha_1+\alpha_{i}}{2}}$, for $i=3,...,n$.
    
    During the above procedure, we have introduced $k$ unknown constants $\alpha_1,\alpha_2,\dotsc,\alpha_k$ and so we need $k$ equations in order to calculate their values. To this end, let $h_{i}$ denote the number of vertices of $G_i$ with degree at least $n^{\alpha_{i+1}}$, for $i=1,\dotsc,{k-1}$. Then, due to Lemma~\ref{subquadratic}, we have 
    $
     h_{1} \cdot n^{\alpha_{2}}\leq 2|E_1|, 
    $
    and so 
    \[
    h_1 = O(n^{\frac{3+\alpha_1}{2}-\alpha_2}).
    \]
    For $i=2,...,k-1$ we have 
    $
    h_{i} \cdot n^{\alpha_{i+1}}\leq 2|E_i |,
    $
    and so 
    \[
    h_i = O(n^{1+\frac{\alpha_1+\alpha_i}{2}-\alpha_{i+1}}).
    \]
    In the last step, we are left with the graph $\ig(\D_k)$, for which we have that 
    $|E_k| \leq \sqrt{\frac{4}{c}} n^{1+\frac{\alpha_{1}+\alpha_k}{2}}$. 
    We now apply the separator result on string graphs, which states that any string graph
    with $m$ edges has a separator of size $O(\sqrt{m})$. Thus $\ig(\D_k)$ has a separator of size 
    $O(n^{\frac{2+\alpha_{1}+\alpha_k}{4}})$ whose vertices we also place in $\sep$. 
    As a result, the size of our clique-based separator is given by 
    \[
    |\sep| = O \left( n^{1-\alpha_1}+ \sum_{i=1}^{k-1} h_k + n^{\frac{2+\alpha_{1}+\alpha_k}{4}} \right) 
    \]
    which gives the following system of $k$ equations:
    \begin{subequations}
     \label{eq:bunch}
     \renewcommand{\theequation}{\theparentequation.\arabic{equation}}
    \begin{align}
      1-\alpha_1   & =  \frac{3+\alpha_1}{2}-\alpha_2  \\
       1-\alpha_1   & =  1+\frac{\alpha_1+\alpha_i}{2}-\alpha_{{i+1}} \quad \text{for} \quad i=2,3,...,k-1 \tag{\ref{eq:bunch}.$i$}\\
      1-\alpha_1   & =  \frac{2+\alpha_{1}+\alpha_k}{4}
      \tag{\ref{eq:bunch}.$k$}
      \label{eq:k}
    \end{align}
    \end{subequations}
    This system can be rewritten as 
    \begin{subequations}
     \renewcommand{\theequation}{\theparentequation.\arabic{equation}}
    \begin{align}
       3\alpha_1  & = 2\alpha_2 -1  \tag{\ref{eq:bunch}.1} \\
      3\alpha_1  & = 2\alpha_{i+1}-\alpha_i \quad \text{for} \quad i=2,3,...,k-1 \tag{\ref{eq:bunch}.$i$} \\
      5\alpha_1   & = 2-\alpha_k 
      \tag{\ref{eq:bunch}.$k$}
    \end{align}
    \end{subequations}

    By multiplying the $i$-th equation above with $2^{i-1}$ and summing up, we can see that the terms $\alpha_2,...,\alpha_k$ 
    cancel and we are left with the equation 
    \[
    \alpha_1\left(3\sum_{i=0}^{k-2} 2^i+5\cdot2^{k-1}\right)= 2^k-1,
    \]
    which solves to 
    \[
    \alpha_1 = \frac{2^k-1}{2^{k+2}-3}.
    \]
    Thus $1-\alpha_1 = \frac{3\cdot 2^k -2}{4\cdot 2^k -3}$ (which converges to $\frac{3}{4}$ as $k\rightarrow \infty$).
    By choosing $k= \log_2(1/\varepsilon)$, we have 
    \[
    1-\alpha_1 = \frac{3-2\varepsilon}{4-3\varepsilon},
    \]
    which is smaller than $\frac{3}{4}+\varepsilon$ for $\varepsilon<1$.
    (We could pick $k$ even larger, say $k=\log n$, but the resulting bounds will be only marginally better and look more ugly.)
    Therefore, for these choices of $\alpha_1$ and $k$, our clique-based separator has a size of $O(n^{\frac{3}{4}+\varepsilon})$ as claimed.
\end{proof}
De~Berg~\etal~\cite{bkmt-cbsgis-23} showed that if a class of graphs admits a clique-based
separator consisting of $S(n)$ cliques, and the separator can be constructed in polynomial time, 
then one can solve $q$-\col in $2^{O(S(n))}$ time. Note that if we can compute the boundaries
$\bd D_i$ in polynomial time,\footnote{Typically the time to do this would not only depend
on $n$, but also on the complexity of $F$ and the distance function~$d$. For simplicity we state
our results in terms of $n$ only. (This, of course, puts restrictions on the complexity of $F$ and~$d$.)} 
then we can also compute our separator in polynomial time.
Indeed, we can then compute $\ig(\D)$ and the arrangement of the disk boundaries in polynomial time,
which allows us to do Step~1 (reducing the ply) in polynomial time. Since a separator
for string graphs can be computed in polynomial time~\cite{Lee-string-sep}, this is easily seen to imply
that the separator construction runs in polynomial time.
Thus we obtain the following result.
\begin{corollary}
Let $d$ be a shortest-path metric on a connected subset $F\subset\Reals^2$ and let 
$\D$ be a set of $n$ geodesic disks with respect to the metric~$d$, where $d$ is such that
the boundaries $\bd D_i$ of the disks in $\D$ can be computed in polynomial time.
Let $q\geq 1$ and $\eps>0$ be fixed constants. Then $q$-\col
can be solved in  $2^{O(n^{\frac{3}{4}+\varepsilon})}$ time on $\ig(\D)$.   
\end{corollary}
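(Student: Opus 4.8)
The plan is to combine Theorem~\ref{thm:separator} with two off-the-shelf ingredients: the meta-theorem of De~Berg~\etal~\cite{bkmt-cbsgis-23}, which converts a polynomial-time-constructible clique-based separator of size $S(n)$ into a $2^{O(S(n))}$-time algorithm for $q$-\col, and a verification that, under the stated hypothesis on~$d$, the separator of Theorem~\ref{thm:separator} is indeed polynomial-time constructible. Since Theorem~\ref{thm:separator} gives a separator with $S(n)=O(n^{3/4+\eps})$ cliques, combining these two facts immediately yields the claimed running time.

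First I would argue polynomial-time constructibility of the separator. Given that every boundary $\bd D_i$ is computable in polynomial time, we form $\ig(\D)$ by testing the $\binom n2$ pairs of disks for intersection, and we build the arrangement~$\A$ of the $n$ disk boundaries. The ply is constant on each face of~$\A$, so a point of maximum ply is found by inspecting one representative per face; this makes Step~1 (the greedy ply reduction) polynomial, since it performs at most $n$ iterations. Steps~2 and~3 — the repeated degree reductions of the bootstrapping scheme and the final call to Lee's string-graph separator — are purely combinatorial, and Lee's separator itself is polynomial-time computable~\cite{Lee-string-sep}. Hence the entire clique-based separator $\sep$, of size $O(n^{3/4+\eps})$, is produced in polynomial time. (All ``polynomial'' bounds here silently depend on the complexity of $F$ and $d$, which the hypothesis on $\bd D_i$ implicitly constrains.)

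Next I would make the reduction to $q$-\col concrete, to expose the dependence on $\sep$. Write $\sep=\{C_1,\dots,C_t\}$ with $t=O(n^{3/4+\eps})$, and let $A,B$ be the two sides with $|A|,|B|\le\delta n$, $\delta<1$. If some $C_j$ has more than $q$ vertices then $\ig(\D)$ has no proper $q$-coloring and we stop; otherwise every $C_j$ has at most $q$ vertices, hence $|\sep|=O(t)$ and the set of all colorings of $\sep$ (from which we keep the proper ones) has size $q^{|\sep|}=2^{O(t)}$, since $q$ is constant. For each proper coloring of $\sep$ we recurse on $\ig(A)$ and $\ig(B)$ — which are again intersection graphs of geodesic disks for the same metric~$d$, so Theorem~\ref{thm:separator} reapplies — asking for a proper $q$-coloring that agrees with the colors already fixed on the separator vertices adjacent to $A$ (resp.\ $B$): a coloring-extension instance, handled exactly as in the framework of~\cite{bkmt-cbsgis-23}, and legitimate because there are no $A$--$B$ edges. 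This gives $T(n)=2^{O(n^{3/4+\eps})}\big(T(|A|)+T(|B|)+\poly(n)\big)$, which solves to $T(n)=2^{O(n^{3/4+\eps})}$ by the usual analysis: the separator exponents shrink geometrically down the recursion tree (since $\delta<1$ is a fixed constant), so the top-level term dominates and no extra $\log n$ factor appears.

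I do not expect a genuine obstacle, since the statement is an application of results already in hand. The only points that need care are the three just flagged: that the arrangement-and-ply computations of Step~1 are really polynomial under the hypothesis on~$d$; that the relevant graph class is closed under taking the induced subgraphs $\ig(A),\ig(B)$ met in the recursion, so the separator theorem can be invoked at every level; and that the recurrence collapses to $2^{O(n^{3/4+\eps})}$ rather than $2^{O(n^{3/4+\eps}\log n)}$, which it does precisely because the subproblem sizes decay by a constant factor at each level and the exponent $3/4+\eps$ is a fixed constant (we may assume $\eps<1/4$, as otherwise the bound is vacuous).
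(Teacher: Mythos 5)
Your proposal follows essentially the same route as the paper: cite the meta-theorem of De~Berg~\etal~\cite{bkmt-cbsgis-23} reducing $q$-\col to a polynomial-time constructible clique-based separator, and verify constructibility by computing the arrangement of disk boundaries for Step~1 and invoking the polynomial-time string-graph separator for the final step. Your explicit unpacking of the meta-theorem (branching over the $q^{O(t)}$ colorings of $\sep$, recursing, and checking that the recurrence collapses to $2^{O(n^{3/4+\eps})}$) is more detail than the paper provides but is precisely what the cited result encapsulates, so the two arguments agree.
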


\subsection{Obtaining a proper path system}\label{se:proper-path-system}
So far we assumed that $\Pi(\D)$ is a proper path set. 
In general this may not be the case, since the paths in $\Pi(\D)$ may overlap along 1-dimensional 
subpaths and a pair of paths can meet multiple times. (The latter can happen since shortest paths 
need not be unique). Lemma~\ref{lem:perturb} will allow us to still work with a proper path set
in our proof. The proof of this lemma requires two technical assumptions on our shortest-path metric $d$. We thus now introduce the concept of a \emph{well-behaved shortest-path metric}.

\begin{subparagraph}{Well-behaved shortest-path metrics.} 
We assume that for any finite set $P=\{ p_1,\ldots,p_n\}$ of points in $F$, there exists 
a set $\Pi(P) = \{ \pi_{ij} : 1 \leq i<j \leq n \}$ of shortest paths between the points 
in $P$ with the following properties:
\begin{enumerate}
\item For any two paths $\pi_{ij},\pi_{k\ell}\in \Pi(P)$, the intersection 
      $\pi_{ij}\cap \pi_{k\ell}$ consists of finitely many connected components.
\item Let $P^+\supset P$ be the set of endpoints of these components, and let 
      $\Gamma_{ij}$ be the set of subpaths into which $\pi_{ij}$ is partitioned by 
      the points in $P^+$. Note that any two subpaths $\gamma\in\Gamma_{ij}$ and 
      $\gamma'\in\Gamma_{k\ell}$ either have disjoint interiors or they are identical. 
      Let $\Gamma = \bigcup_{i,j} \Gamma_{ij}$ be the set of all distinct subpaths. 
      Then there are values $\delta_1,\delta_2>0$ such that the following holds. 
\begin{itemize}
\item For any $v\in P^+$, the Euclidean ball $B(v,\delta_1)$ of radius $\delta_1$ 
      centered at $v$ only intersects paths from $\Pi(P)$ that contain~$v$, and 
      inside $B(v,\delta_1)$ any two paths are either disjoint (except at $v$) or they coincide.
      Moreover, the balls $B(v,\delta_1)$ for $v\in P^+$ are pairwise disjoint.
\item Outside the balls $B(v,\delta_1)$, the minimum distance between any two subpaths $\gamma,\gamma'\in \Gamma$ is at least $\delta_2$.
\end{itemize}
\end{enumerate}
We call a metric satisfying these conditions \emph{well-behaved}. The conditions are 
easily seen to be satisfied when $F$ is a closed and connected polygonal region and 
the paths $\pi_{ij}$ are shortest (in the Euclidean sense) paths inside~$F$. 
In fact, $F$ does not have to be polygonal; its boundary may consist of finitely 
many constant-degree algebraic curves. Another example is when the paths in $\pi_{ij}$ 
are projections onto $\Reals^2$ of shortest paths on a polyhedral terrain, or when 
the paths are shortest paths in a weighted polygonal subdivision.  (In the latter case, 
the length of a section of the path inside a region is multiplied by the weight of the 
region~\cite{weighted}. 
In this setting, shortest paths are piece-wise linear and so they satisfy the conditions.)
\end{subparagraph}

We can now proceed with the proof of the lemma. It might be known,
but we have not been able to find a reference. Recall that $E$ is the edge set of $\ig(\D)$.
\begin{lemma} \label{lem:perturb}
Let $d$ be a well-behaved shortest-path metric on a closed and path-connected subset $F\subset \Reals^2$,
and let $\Pi(\D) = \{ \pi_{ij} : (D_i,D_j)\in E\}$ be a set of shortest paths 
with the properties stated above. Then there is a
proper path set $\Pi'(\D) = \{\pi'_{ij}: (D_i,D_j)\in E\}$ in $\Reals^2$ with the following
property: for every pair $\pi'_{ij},\pi'_{k\ell}$ of crossing paths in $\Pi'(\D)$, 
the corresponding paths $\pi_{ij},\pi_{k\ell}$ in $\Pi(\D)$ intersect.
\end{lemma}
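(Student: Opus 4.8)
The plan is to turn the collection of shortest paths $\Pi(\D)$ into a drawing in the plane by a careful local perturbation that (a) separates overlapping subpaths, (b) replaces each higher-order intersection point by a bounded number of transversal crossings, and (c) never creates a crossing between two paths whose originals were disjoint. The well-behavedness assumptions are exactly what make such a perturbation possible: outside the balls $B(v,\delta_1)$ the distinct subpaths in $\Gamma$ are pairwise at distance $\geq \delta_2$, so there is genuine ``room'' to move them apart, and inside each $B(v,\delta_1)$ the only paths present all pass through $v$ and locally either coincide or are otherwise disjoint, so the combinatorics of what we must fix is finite and local.

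First I would handle the shared-subpath phenomenon. The set $\Gamma$ of distinct subpaths, together with the vertex set $P^+$, forms a plane graph-like structure (subpaths meet only at common endpoints in $P^+$ or have disjoint interiors). For each edge $(D_i,D_j)\in E$, the path $\pi_{ij}$ is a concatenation of subpaths from $\Gamma$; think of each original path as a ``thread'' routed along a sequence of $\Gamma$-edges. Now thicken each $\Gamma$-edge into a thin corridor of width $<\delta_2$ and each vertex $v\in P^+$ into the ball $B(v,\delta_1)$; inside a corridor, route the (at most $|E|$) threads that use that subpath as parallel, pairwise-disjoint arcs, in some order. The only remaining issue is: when two threads enter a vertex ball $B(v,\delta_1)$ along one subpath and leave along subpaths, do their arcs cross inside the ball? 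Since $v\in P^+$ has bounded local structure, inside $B(v,\delta_1)$ we have a finite set of ``ports'' (one per incident subpath) and a finite set of threads, each specified by an in-port and an out-port; we can realize any such routing with finitely many transversal crossings, and — crucially — two threads get an arc-crossing inside the ball only if their port-pairs ``interleave'', which forces the two original paths $\pi_{ij},\pi_{k\ell}$ to actually pass through $v$, hence to intersect. So the property ``every crossing of $\Pi'$ comes from an intersection of the originals'' is maintained.

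Next I would do a final generic perturbation to upgrade the drawing to a genuine proper path set: push the endpoints and arcs by an arbitrarily small amount so that (i) no three arcs pass through a common point, (ii) no crossing coincides with an endpoint, (iii) all crossings are transversal, and (iv) any two arcs cross at most twice. Points (i)--(iii) are standard general-position arguments (the set of bad configurations is lower-dimensional). Point (iv) — the ``at most two common points'' requirement of a proper path set — is the one place I would be most careful: two original shortest paths $\pi_{ij},\pi_{k\ell}$ may have intersected in many components, so naively their perturbed versions could cross many times. The resolution is that ``proper path set'' is only used through the Crossing Lemma and through the double-counting in Lemma~\ref{subquadratic}, and there the bound ``twice'' is only a convenience; but to stay faithful to the stated definition I would instead, when replacing a shared subpath between $\pi_{ij}$ and $\pi_{k\ell}$, route the two threads as non-crossing parallel arcs along the whole corridor, so a shared subpath contributes \emph{zero} crossings between that pair, and an isolated transversal intersection point contributes one; with care in the vertex balls this keeps the number of $\pi'_{ij}\cap\pi'_{k\ell}$ crossings bounded by a constant (indeed $\leq 2$ after a final merge of consecutive crossings), while still only ever creating crossings between pairs that originally met.

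The main obstacle, then, is the bookkeeping inside the vertex balls $B(v,\delta_1)$: one must argue that for any prescribed in-port/out-port assignment of finitely many threads through a small disk, there is a planar routing in which two threads cross only if their port-pairs interleave, and then only a bounded number of times. I would prove this by a simple inductive ``untangling'' argument on the number of threads (or by observing it reduces to drawing a set of chords of a disk and bounding crossings by interleavings), keeping all arcs within $B(v,\delta_1)$ so no new long-range interactions are introduced. Once this local lemma is in hand, gluing the corridor routings to the ball routings along the shared ports yields the global drawing $\Pi'(\D)$, and a last application of the general-position perturbation delivers a proper path set with the claimed property, completing the proof of Lemma~\ref{lem:perturb}.
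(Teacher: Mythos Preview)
Your overall strategy---thicken the plane graph formed by the subpaths $\Gamma$ and vertices $P^+$, route the paths as parallel threads through corridors, and resolve interactions locally inside the vertex balls---matches the paper's second step quite closely, including the observation that a crossing inside $B(v,\delta_1)$ only arises between paths that both pass through~$v$ and hence intersect.

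The genuine gap is in your handling of property~(iv), the ``at most two crossings per pair'' requirement. You correctly identify this as the delicate point, but your resolution does not work. Two shortest paths $\pi_{ij},\pi_{k\ell}$ may intersect in an \emph{unbounded} (though finite) number of connected components---well-behavedness only guarantees finiteness, not a uniform constant. Your routing produces a crossing near the start and end of each such component, so without further work the number of crossings between $\pi'_{ij}$ and $\pi'_{k\ell}$ is proportional to the number of components, which is not a constant. The phrase ``indeed $\leq 2$ after a final merge of consecutive crossings'' is not justified: merging a pair of consecutive crossings means swapping the two arcs between them, which changes which original path a given piece of $\pi'_{ij}$ shadows, and you give no argument that the crossing-implies-intersection property (or even the bound of~2) survives an iterated sequence of such swaps.

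The paper fixes this with an additional preliminary step you are missing: before thickening, it perturbs the edge lengths of the plane graph $\G^+=(P^+,E^+)$ so that shortest paths in $\G^+$ become \emph{unique}, and then replaces each $\pi_{ij}$ by the (still-shortest) unique shortest $p_i$--$p_j$ path in~$\G^+$. Uniqueness forces any two of these re-routed paths to share at most one connected subpath. Only after this reduction does the paper perform the thickening and a ``roundabout'' routing inside each thick vertex (using lanes indexed by exit number), which then guarantees that two paths cross only at the first and last vertex of their single shared subpath---hence at most twice. This re-routing step is the missing idea in your proposal.
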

\begin{proof}
We will prove the lemma in two steps. First, we modify the paths
such that any two paths $\pi_{ij}$ and $\pi_{k\ell}$ meet in a single connected component, and
then we perturb these paths slightly to obtain a set $\Pi'(\D)$ where any two paths cross in at most two points.

The first modification is done as follows. 
For each pair $\pi_{ij},\pi_{k\ell}\in \Pi(\D)$, and for each connected 
component~$I\subset\pi_{ij}\cap \pi_{k\ell}$, we add both endpoints of~$I$ to the set~$P$,
thus (by the first property of  well-behaved shortest-path metrics) obtaining a finite set~$P^+$.
The set~$P^+$, together with the pieces of the paths~$\pi_{ij}$ connecting them,
forms a plane graph~$\G^+=(P^+,E^+)$. We now slightly modify the edge lengths
in~$\G^+$, to ensure that there is a unique shortest path between any two points in~$P^+$,
and we replace each original path $\pi_{ij}$ by the shortest path between $p_i$ and $p_j$ in $\G^+$;
see Fig.~\ref{fig:roundabout}(i) for an example.
\begin{figure}[h]
\begin{center}
\includegraphics{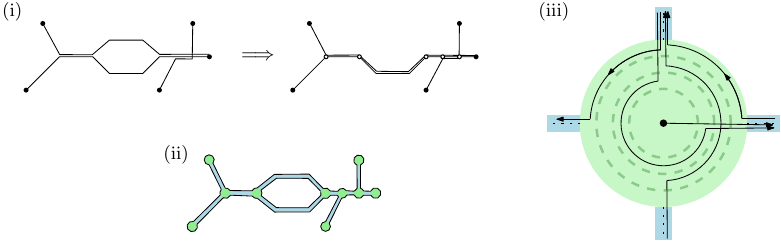}
\end{center}
\caption{(i)~The input paths (left) and the graph $\G^+$ after re-routing the paths (right). 
         (ii)~The thickened graph. 
         (iii)~Drawing the paths inside a thick vertex.} 
\label{fig:roundabout}
\end{figure}
Since the shortest paths between any two vertices in~$\G^+$ are unique,
two paths $\pi_{ij},\pi_{k\ell}$ do not intersect in more than one connected component.
Moreover, the new paths are still shortest paths under the original metric.
Formally, the perturbation of the edges lengths is done as follows.
Let $e_1,e_2,...,e_k$ be any ordering on the edges of~$\G^+$. 
Then, perturb the edge lengths such that the new lengths are given by $|e_i'|=|e_i|+2^i\varepsilon$, 
for an arbitrarily small~$\varepsilon>0$. Any path that was shortest initially, 
will remain shortest and no two paths have the same length. 

Next we show how to slightly perturb the paths such that all crossings are proper. 
For that, we will make use of the second property of well-behaved shortest-path metrics to 
thicken the graph $\G^+$. Intuitively, we replace the edges of $\G^+$ by thick curves 
and its vertices by small balls. Within each thick curve, we can slightly separate all paths 
that use the corresponding edge. Curves going to the same vertex meet at the disk around it. 
We then show that it's possible to route paths within these balls, 
such that paths only possibly cross within the balls where they meet for the first time or 
for the last time. Formally this is done as follows.

Let $\delta_1,\delta_2>0$ be the values specified in the second property of well-behaved metrics. 
We thicken $\G^+$ by replacing each vertex $v\in P^+$ by the ball $B(v,\delta_1)$---we
call this ball a \emph{thick vertex}---and
by replacing each edge in $\G^+$ by a \emph{thick edge} of width~$2\delta_2/3$.
(Formally, we take the Minkowski sum of the original edge and a ball of radius~$\delta/3$,
and remove the parts of the resulting thick edge inside the balls $B(v,\delta_1)$.)
See Fig.~\ref{fig:roundabout}(ii) for an example.
Note that the thick edges are pairwise disjoint.

Now we draw the paths $\pi_{ij}\in \Pi'(\D)$ one by one, as follows. We start
at the point $p_i$ and follow the thick edges of $\pi_{ij}$ until we reach $p_{ij}$.
An invariant of our process is that the paths may only cross inside thick vertices, 
and not inside thick edges. It is easy to see that this invariant can be maintained:
when $\pi_{ij}$ enters a thick edge in between two previously drawn
paths, then we can stay in between these two paths until we reach the next                          
thick vertex. It will be convenient to view the edges as roads
with two lanes, where we will always stay in the right lane as we draw the paths.
(Thus the lanes that we use depend on whether we draw $\pi_{ij}$ from $p_i$
to $p_j$ or in the other direction; this choice can be made arbitrarily.)

Routing inside a thick vertex $B(v,\delta_1)$ is done as follows. Let $\deg(v)$ denote the degree of vertex $v$
in~$\G^+$. We view $B(v,\delta_1)$ as a roundabout with $\deg(v)-1$ lanes $L_1,\ldots,L_{\deg(v)-1}$,
where $L_1$ is the outermost lane. Now suppose we draw a path $\pi_{ij}$ and 
we enter the roundabout through some thick edge. Then
draw $\pi_{ij}$ as follows: if $\pi_{ij}$ leaves the roundabout at the $j$-th exit in counterclockwise
order (counting from the road on which we enter) then we will use lane~$L_j$ on the roundabout
to draw $\pi_{ij}$; see Figure~\ref{fig:roundabout}(iii) for some examples.
We can easily do this in such a way that we maintain the following invariant: 
paths entering and exiting the roundabout on the same roads do not cross.
Note that paths that use the same two edges incident to $v$ but that
were drawn in the opposite direction, do not cross either. Thus two paths 
$\pi_{ij},\pi_{k\ell}$ may only cross when $v$ is either the first or the last vertex
of $\pi_{ij}\cap \pi_{k\ell}$, and when this happens they have
a single crossing inside $B(v,\delta_1)$. Hence, two paths have at most two crossings
and these crossings are proper crossing. 

There is one thing that we have swept under the rug so far, namely what
happens at the start vertex (or end vertex) of the path~$\pi_{ij}$.
In this case we simply draw $\pi_{ij}$ from $p_i$ to the correct exit road 
(or we draw $\pi_{ij}$ from the entry road to $p_j$).
This way we may have to create intersection with paths $\pi_{kl}$ traversing the
roundabout, but this is okay since this is then the first (or last) vertex
where $\pi_{ij}$ meets~$\pi_{kl}$.

\end{proof}
Recall that the assumption that $\Pi(\D)$ is a proper path set was used so that
we could apply the Crossing Lemma. Now, instead of applying the Crossing Lemma to
(a subset of) $\Pi(\D)$ we apply it to (the corresponding subset of) $\Pi'(\D)$. 
Then we obtain a bound on $|\X|$, the number of crossings of the perturbed paths, 
which by Lemma~\ref{lem:perturb} gives us a collection of points on the intersections of the original paths.
These points can be used as the set $\X$ in the proof of Lemma~\ref{subquadratic}.
Note that Observation~1 in that proof still holds. Also note that 
the fact that crossings may now coincide is not a problem:
we just need to re-define $\lambda_1(x)$ (and, similarly, $\lambda_2(x)$)
so that crossings that coincide with $x$ are also counted.
We conclude that the proof also works without the assumption that $\Pi(\D)$ is a proper path set.

\section{Application to distance oracles}
With our clique-based separator at hand, we can apply standard techniques
to obtain an almost-exact hop-distance oracle for $\ig(\D)$. This is
done using a \emph{separator tree}~$T$ as follows.
\begin{itemize}
\item Let $\sep$ be a clique-based separator for~$\ig(\D)$. The root of the
      separator tree~$T$ stores, for each disk~$D_i\in\D$ and each clique
      $C_k\in \sep$, the value $\hdist(D_i,C_k) = \min_{D_j\in C_k} \hdist(D_i,D_j)$,
      where $\hdist(D_i,D_j)$ denotes the hop-distance between $D_i$ and $D_j$ in $\ig(\D)$.
      Thus we need $O(n \cdot n^{3/4+\eps})=O(n^{7/4+\eps})$ storage at the root.
\item Let $\D_A$ and $\D_B$ be the two subsets into which $\sep$ splits $\D\setminus \sep$,
      such that $|\D_A|,|\D_B|\leq 2n/3$ and there are no edges between $\D_A$ and $\D_B$.
      We recursively construct separator trees $T_A$ for $\D_A$ and $T_B$ for $\D_B$,
      which become the two subtrees of the root of~$T$. 
\end{itemize}
The amount of storage of the structure satisfies the recurrence 
$S(n) = O(n^{7/4+\eps}) + S(n_A) + S(n_B)$, where $n_A, n_B \leq 2n/3$  and $n_A+n_B \leq n$.
This solves to  $S(n) = O(n^{7/4+\eps})$.
\medskip

To answer a query for the hop-distance between two disks $D_i,D_j$ we proceed as follows.
First, we determine $Z_1 \mydef \min_{C_k\in\sep} \left( \hdist(D_i,C_k) + \hdist(C_k,D_j) \right)$.
When $D_i$ and $D_j$ do not lie in the same part of the partition, we are done and report $Z_1$.
Otherwise, assuming without loss of generality that $D_i,D_j\in \D_A$, we recursively 
query in~$T_A$, thus obtaining a value $Z_2$, and we report $\min(Z_1,Z_2)$.
Thus the query time satisfies the recurrence $Q(n) = O(n^{3/4+\eps}) + Q(n_A)$,
where $n_A\leq 2n/3$, which solves to $Q(n) = O(n^{3/4+\eps})$. It is easily seen that
our query reports a value $d^*$ such that $d^* \leq \hdist(D_i,D_j) \leq d^* +1$.
\begin{theorem}
Let $d$ be a well-behaved shortest-path metric on a closed and path-connected subset $F\subset\Reals^2$,
and let $\D$ be a set of $n$ geodesic disks with respect to the metric~$d$. Let $\eps>0$
be any fixed constant.
Then there is a distance oracle for $\ig(\D)$ that uses $O(n^{\frac{7}{4}+\varepsilon})$
storage and that can report, for any two query disks $D_i,D_j\in \D$, in $O(n^{3/4+\eps})$ time
a value $d^*$ such that $d^* \leq \hdist(D_i,D_j) \leq d^* +1$.
\end{theorem}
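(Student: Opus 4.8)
The plan is to use the standard separator-tree construction (as already done by Arikati~\etal~\cite{DBLP:conf/esa/ArikatiCCDSZ96}), with the clique-based separator of Theorem~\ref{thm:separator} playing the role of an ordinary vertex separator. I would build a \emph{separator tree}~$T$ recursively. At the root I compute a clique-based separator $\sep$ for $\ig(\D)$ consisting of $t=O(n^{3/4+\eps})$ cliques $C_1,\ldots,C_t$, and I store, for every disk $D_i\in\D$ and every clique $C_k\in\sep$, the value $\hdist(D_i,C_k)\mydef\min_{D_j\in C_k}\hdist(D_i,D_j)$, together with, for each $D_i$, the part of the partition it belongs to. This costs $O(n\cdot n^{3/4+\eps})=O(n^{7/4+\eps})$ storage. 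The separator splits $\D\setminus\sep$ into $\D_A,\D_B$ with $|\D_A|,|\D_B|\le 2n/3$ and no edges between them; I recurse on $\ig(\D_A)$ and $\ig(\D_B)$, which are again geodesic-disk intersection graphs, to form the two subtrees. Since $n_A+n_B\le n$ and the exponent $7/4$ exceeds $1$, the recurrence $S(n)=O(n^{7/4+\eps})+S(n_A)+S(n_B)$ solves to $S(n)=O(n^{7/4+\eps})$; the recursion bottoms out at sets of constant size, where I simply store the full distance matrix, contributing only lower-order terms.

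To answer a query $(D_i,D_j)$ I first compute $Z_1\mydef\min_{C_k\in\sep}\bigl(\hdist(D_i,C_k)+\hdist(C_k,D_j)\bigr)$ in $O(t)=O(n^{3/4+\eps})$ time from the table at the root. If $D_i$ and $D_j$ do not lie in the same part, I report $Z_1$; otherwise, assuming $D_i,D_j\in\D_A$, I recurse into $T_A$ to obtain $Z_2$ and report $\min(Z_1,Z_2)$. The recurrence $Q(n)=O(n^{3/4+\eps})+Q(n_A)$ with $n_A\le 2n/3$ solves to $Q(n)=O(n^{3/4+\eps})$.

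The part I expect to require the most care---and the only place where the clique structure is genuinely used---is the $\pm1$ error analysis. For the lower bound on the reported value: if $Z_1$ is attained at clique $C_k$, with $\hdist(D_i,C_k)=\hdist(D_i,D_a)$ and $\hdist(C_k,D_j)=\hdist(D_b,D_j)$ for some $D_a,D_b\in C_k$, then since $C_k$ is a clique $\hdist(D_a,D_b)\le 1$, so the triangle inequality gives $\hdist(D_i,D_j)\le Z_1+1$; the recursively returned value $Z_2$ satisfies the analogous inequality by induction, using that hop-distances in the induced subgraph $\ig(\D_A)$ are never smaller than in $\ig(\D)$. For the upper bound: fix a shortest $D_i$--$D_j$ path $P$ in $\ig(\D)$. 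If $D_i,D_j$ are in different parts (or one of them lies in $\sep$), then $P$ contains a vertex $D_\ell$ of some clique $C_k\in\sep$, so $Z_1\le\hdist(D_i,D_\ell)+\hdist(D_\ell,D_j)=\hdist(D_i,D_j)$. If $D_i,D_j$ lie in the same part $\D_A$, then either $P$ stays inside $\D_A$---in which case $P$ is a path in $\ig(\D_A)$ and the recursion returns $Z_2\le\hdist_{\ig(\D_A)}(D_i,D_j)=\hdist(D_i,D_j)$---or $P$ leaves $\D_A$, hence crosses $\sep$, which again gives $Z_1\le\hdist(D_i,D_j)$. Combining, the reported value $d^*$ satisfies $\hdist(D_i,D_j)-1\le d^*\le\hdist(D_i,D_j)$, i.e.\ $d^*\le\hdist(D_i,D_j)\le d^*+1$, completing the proof.
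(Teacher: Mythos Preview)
Your proposal is correct and follows essentially the same approach as the paper: the separator-tree construction, the stored clique-to-disk distances, the recurrences for storage and query time, and the query procedure all match the paper's argument exactly. You even spell out the $\pm 1$ error analysis in more detail than the paper, which merely asserts that ``it is easily seen'' that the reported value satisfies $d^*\le\hdist(D_i,D_j)\le d^*+1$.
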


\section{Concluding remarks}
In this paper, we showed that the intersection graph of a set of geodesic disks, in any well-behaved 
shortest-path metric in the plane, admits a clique-based separators of sublinear size, using a method 
that is quite different from previously used approaches.
Separators have been used extensively to obtain efficient graph algorithms, and clique-based separators have already found many
uses as well. We gave two straightforward applications of our new clique-based separators, namely for $q$-\col
and almost exact distance oracles, but we expect our separator to have more applications.
More generally, we hope that our paper inspires more research on intersection graphs of geodesic
disks in the general setting that we studied---or, more modestly, of geodesic disks in polygons with holes
or on terrains. 

An obvious open problem is to improve upon our bounds: do intersection graphs of geodesic disks admit a separator
of size $o(n^{3/4})$, either in general or perhaps in the setting of polygons with holes?
Our distance oracle is the first almost exact distance oracle with sublinear query time and subquadratic 
storage in such a general setting, but (besides our novel separator) it uses only
standard techniques. It would be interesting to see if more advanced
techniques can be applied to get better bounds.

\bibliography{ref}

\newpage

\end{document}